\newcommand{\one}{\mathbf{1}}
\newcommand{\R}{\mathbb{R}}
\newcommand{\eps}{\varepsilon}
\newcommand{\cA}{\mathcal{A}}
\DeclareMathOperator{\poly}{poly}
\renewcommand{\Pr}{\mathbb{P}}
\newcommand{\Oracle}{\mathcal{O}}
\newcommand{\roottwo}{i^\star}
\newcommand{\ForC}{\mathcal{C}_{k\textrm{-for}}}
\DeclareMathOperator{\adeg}{\widetilde{\textrm{deg}}}
\newtheorem{theorem}{Theorem}
\newtheorem{definition}{Definition}
\newtheorem{observation}{Observation}
\newtheorem{problem}{Problem}
\newtheorem{lemma}{Lemma}
\newtheorem{claim}{Claim}
\title{An Exponential Separation Between Quantum and Quantum-Inspired
  Classical Algorithms for Linear Systems}
\date{}
\author{Allan Gr\o nlund\\Kvantify\\\texttt{allan.g.joergensen@gmail.com} \and Kasper Green Larsen\\Aarhus University\\\texttt{larsen@cs.au.dk}}
\begin{document}

\maketitle

\begin{abstract}
Achieving a provable exponential quantum speedup for an important machine learning task has been a central research goal since the seminal HHL quantum algorithm for solving linear systems and the subsequent quantum recommender systems algorithm by Kerenidis and Prakash. These algorithms were initially believed to be strong candidates for exponential speedups, but a lower bound ruling out similar classical improvements remained absent. In breakthrough work by Tang, it was demonstrated that this lack of progress in classical lower bounds was for good reasons. Concretely, she gave a classical counterpart of the quantum recommender systems algorithm, reducing the quantum advantage to a mere polynomial. Her approach is quite general and was named \emph{quantum-inspired classical} algorithms. Since then, almost all the initially exponential quantum machine learning speedups have been reduced to polynomial via new quantum-inspired classical algorithms. From the current state-of-affairs, it is unclear whether we can hope for exponential quantum speedups for any natural machine learning task.

In this work, we present the first such provable exponential separation between quantum and quantum-inspired classical algorithms for the basic problem of solving a linear system when the input matrix is well-conditioned and has sparse rows and columns.
\end{abstract}


\section{Introduction}
Demonstrating an exponential quantum advantage for a relevant machine learning task has been an important research goal since the promising quantum algorithm by Harrow, Hassidim and Lloyd~\cite{HHL} for solving linear systems. Ignoring a few details, the HHL algorithm (and later improvements~\cite{improve1HHL,improveHHL}) generates a quantum state $\sum_{i=1}^n x_i |i\rangle$ corresponding to the solution $x=M^{-1}y$ to an $n \times n$ linear system of equations $Mx=y$ in just $\poly(\ln n)$ time. At first sight, this seems exponentially faster than any classic algorithm, which probably has to read the entire input matrix $M$ to solve the same problem. However, as pointed out e.g.\ by Aaronson~\cite{Aaronson}, the analysis of the HHL algorithm assumes the input matrix is given in a carefully chosen input format. Taking this \emph{state preparation} into consideration, it was initially unclear how the performance could be compared to a classical algorithm and whether any quantum advantage remained.

The shortcoming of the HHL algorithm regarding state preparation was later addressed in several works, with one of the first and most thorough treatments in the thesis of Prakash~\cite{prakashThesis}. Prakash introduced a framework where input matrices and vectors to a linear algebraic machine learning problem are given as simple classical data structures, but with quantum access to the memory representations. This allows for a direct comparison to classical data structures where the input is given in the same data structure. Expanding on these ideas, Kerenidis and Prakash~\cite{kerenidis} presented a quantum recommender systems algorithm that was exponentially faster than the best classical counterpart. Their algorithm was one of the strongest candidates for a provable exponential quantum advantage and sparked a fruitful line of research, yielding exponential speedups for a host of important machine learning tasks, including solving linear systems~\cite{sparsesystems}, linear regression~\cite{sparsesystems}, PCA~\cite{sparsesystems}, recommender systems~\cite{kerenidis}, supervised clustering~\cite{Lloyd2013QuantumAF} and Hamiltonian simulation~\cite{blockencoding}.

Despite the exponential speedups over classical algorithms, a lower bound for classical algorithms ruling out a similar improvement via new algorithmic ideas remained elusive. It turned out that this was for good reasons: In breakthrough work by Tang~\cite{ewin}, it was demonstrated that on all inputs where the recommender systems algorithm by Kerenidis and Prakash yielded an exponential speedup, a similar speedup could be obtained via a classical algorithmic approach that she dubbed \emph{quantum-inspired classical} (QIC) algorithms. Since then, almost all the initially exponential speedups from quantum algorithms have been reduced to mere polynomial speedups through the development of new efficient QIC algorithms, see e.g.~\cite{qiclinRec, qicPCA,qicLinSys}. The disheartening state-of-affairs is thus that only a few machine learning problems remain where there is still an exponential gap between quantum and QIC algorithms. Based on Tang's work, it remains entirely plausible that new QIC algorithms may close these gaps as well.

\paragraph{Our Contribution.} In this work, we present the first provable exponential separation between quantum and quantum-inspired classical algorithms for the central problem of solving linear systems with sparse rows and columns. The lower bound is exponentially higher than known quantum upper bounds~\cite{optQuantum} when the matrix is well-conditioned, thus establishing the separation.

\subsection{Quantum-Inspired Classical Algorithms}
In the following, we formally introduce QIC algorithms, the linear system problem, our lower bound statement and previous work on proving separations between quantum and QIC algorithms.

As mentioned earlier, the work by Prakash~\cite{prakashThesis}, and later work by Kerenidis and Prakash~\cite{kerenidis}, gave rigorous frameworks for directly comparing a quantum algorithm for a machine learning task with a classical counterpart. Taking state preparation into account, Kerenidis and Prakash define a natural input format (data structure) for matrices and vectors in linear algebraic problems. At a high level, they assume the input is presented as a classical binary tree based data structure over the entries of the rows and columns of a matrix. They then built their quantum recommender system algorithm assuming quantum access to the memory representation of this classical data structure. Follow-up works have used essentially the same input representation or equivalent formulations. In many cases, for sufficiently well-conditioned matrices, the obtained quantum algorithms run in just $\poly(\ln n)$ time.

Now to prove a separation between quantum and classical algorithms, any fair comparison should use the same input representation. Given the simplicity of the data structure by Kerenidis and Prakash for representing the input, it seemed reasonable to conjecture that any classical algorithm for e.g.\ recommender systems would need polynomial time even when given this data structure. This intuition was however proven false by Tang~\cite{ewin}. Her key insight was that the classical data structure allows efficient classical (i.e.\ $\poly(\ln n)$ time) $\ell_2$ sampling (formally defined below) from the rows and columns of the input, as well as efficient reading of individual entries. Exploiting this sampling access, she gave a classical algorithm for recommender systems that runs in just $\poly(\ln n)$ time on all matrices where the quantum algorithm by Kerenidis and Prakash does. She referred to such classical algorithms with $\ell_2$ sampling access to input matrices and vectors as \emph{quantum-inspired classical} algorithms. This sampling access has since then proved extremely useful in other machine learning tasks, see e.g.~\cite{qiclinRec, qicPCA, qicLinSys}. Tang~\cite{ewin} summarized the above discussion as follows: ``\emph{when quantum machine learning algorithms are compared to classical machine learning algorithms in the context of finding speedups, any state preparation assumptions in the quantum machine learning model should be matched with $\ell_2$-norm sampling assumptions in the classical machine learning model'}'.

Using the notation of  Mande and Shao~\cite{LBfromComm}, QIC algorithms formally have the following access to the input:

\begin{definition}[Query Access]
    For a vector $v \in \R^n$, we have $Q(v)$, query access to $v$, if for all $i$, we can query $v_i$. Likewise for a matrix $M \in \R^{m \times n}$, we have query access to $M$ if for all $(i,j) \in [m] \times [n]$, we can query $M_{i,j}$.
\end{definition}

\begin{definition}[Sampling and Query Access to a Vector]
    For a vector $v \in \R^n$, we have $SQ(v)$, sampling and query access to $v$, if we can
    \begin{itemize}
        \item Query for entries of $v$ as in $Q(v)$.
        \item Obtain independent samples of indices $i \in [n]$, each distributed as $\Pr[i] = v_i^2/\|v\|^2$.
        \item Query for $\|v\|$.
    \end{itemize}
\end{definition}

\begin{definition}[Sampling and Query Access to a Matrix]
    For a matrix $M \in \R^{m \times n}$, we have $SQ(M)$ if we have $SQ(M_{i,\star})$, $SQ(M_{\star,j})$, $SQ(r)$ and $SQ(c)$ for all $i \in m$ and $j\in n$ where $r(M) = (\|M_{1,\star}\|,\dots,\|M_{m,\star}\|)$ and $c(M)=(\|M_{\star,1}\|,\dots,\|M_{\star,n}\|)$. Here $M_{i,\star}$ is the $i$'th row of $M$, $M_{\star,j}$ is the $j$'th column, $r(M)$ is the vector of row-norms and $c(M)$ is the vector of column-norms of $M$.
  \end{definition}

In the above definitions, and throughtout the paper, we use $\|x\|$ to denote the $\ell_2$ norm $\|x\|_2$. With the input representation defined, we proceed to present the problem of solving a linear system via a QIC algorithm. Here one again needs to be careful for a fair comparison between quantum and QIC algorithms. Concretely, the known quantum algorithms for solving a linear system $Mx=y$ do not output the full solution $x$ (which would take linear time), but instead a quantum state $\sum_i \tilde{x}_i |i\rangle$ for a $\tilde{x}$ approximating the solution $x$. Taking measurements on such a state allows one to sample an index $i$ with probability $\tilde{x}_i^2/\|\tilde{x}\|^2$. With this in mind, the classical analog of solving a linear system is as follows.

\begin{problem}[Linear Systems]
Given $SQ(M)$ and $SQ(y)$ for a symmetric and real matrix $M \in \R^{n \times n}$ of full rank, a vector $y \in \R^n$ and precision $\eps>0$, the Linear Systems problem is to support sampling an index $i$ with probability $\tilde{x}_i^2/\|\tilde{x}\|^2$ from a vector $\tilde{x}$ satisfying that $\|\tilde{x}-x\| \leq \eps \|x\|$ where $x = M^{-1}y$ is the solution to the linear system of equations $Mx=y$.
\end{problem}

The query complexity of a QIC algorithm for solving a linear system, is the number of queries to $SQ(M)$ and $SQ(y)$ necessary to sample one index $i$ from $\tilde{x}$. We remark that the known QIC algorithms furthermore output the value $\tilde{x}_i$ upon sampling $i$. Since we aim to prove a lower bound, our results are only stronger if we prove it for merely sampling $i$.

\paragraph{Quantum Benchmark.}
To prove our exponential separation, we first present the state-of-the-art performance of quantum algorithms for linear systems. Here we focus on the case where the input matrix $M$ has sparse rows and columns, i.e.\ every row and column has at most $s$ non-zero entries. The running time of the best known quantum algorithm depends on the condition number of $M$, defined as
\[
\kappa = \sigma_{\max}/\sigma_{\min}.
\]
Here $\sigma_{\max}$ is the largest singular value of $M$ and $\sigma_{\min}$ is the smallest singular value. Note that for real symmetric $M$ of full rank, all eigenvalues $\lambda_1 \geq \cdots \geq \lambda_n$ of $M$ are real and non-zero, and the singular values $\sigma_{\max}=\sigma_1 \geq \cdots \geq \sigma_n = \sigma_{\min} > 0$ are the absolute values of the eigenvalues $\{|\lambda_i|\}_{i=1}^n$ in sorted order. Given a precision $\eps>0$, matrix $M$ and vector $y$ as input (in the classical data structure format), the quantum algorithm by Costa et al.~\cite{optQuantum} (improving over previous works, e.g.~\cite{sparsesystems}) runs in time
\begin{align}
  \label{eq:q}
  O(\kappa \ln(1/\eps) \min\{ s, \sqrt{s} (\kappa s/\eps)^{o(1)}\})
\end{align}
to produce a quantum state $\sum_i \tilde{x}_i |i\rangle$ for a $\tilde{x}$ with $\|\tilde{x}-x\| \leq \eps \|x\|$ with $x = M^{-1}y$. We remark that to derive~\eqref{eq:q} from~\cite{sparsesystems}, one invokes either a reduction from~\cite{blockencoding} (for the $s$ guarantee) or~\cite{sparseToBlock} (for the $\sqrt{s}(\kappa s/\eps)^{o(1)}$ guarantee) to obtain a block-encoding of a sparse matrix. See also the recent work~\cite{low2024quantumlinearalgorithmoptimal} for high success probability guarantees.

\paragraph{QIC Benchmark.}
For QIC algorithms, the best bound is due to~\cite{qicLinSys} and has a query complexity (and running time) of
\begin{align}
  \label{eq:qic}
  \poly(s, \kappa_F, \ln(1/\eps), \ln n),
\end{align}
where
\[
\kappa_F = \|M\|_F/\sigma_{\min} = \frac{\sqrt{\sum_i \sigma_i^2}}{\sigma_{\min}}.
\]
Since $\kappa_F$ may be larger than $\kappa$ by as much as a $\sqrt{n}$ factor, there are thus matrices with $\kappa,s = \poly(\ln n)$ where there is an exponential gap between~\eqref{eq:q} and~\eqref{eq:qic}. 

An alternative upper bound for a closely related problem was given by Gharibian and Le Gall~\cite{qicUpper} and is based on the quantum singular value decomposition. In more detail, they show that for a matrix $M$ where $M^\dagger M$ has eigenvalues $1=\lambda_1 \geq \dots \geq \lambda_n = \kappa^{-1}$, if we apply a degree $d$ polynomial $P$ to the square-roots of the eigenvalues (and maintain the same eigenvectors) to obtain a matrix $P(\sqrt{M^\dagger M})$ with eigenvalues $P(\lambda^{1/2}_1),\dots,P(\lambda^{1/2}_n)$, then given sampling and query access to $M$ and two vectors $u$ and $v$ of unit norm, it is possible to classically estimate the number $u^\dagger P(\sqrt{M^\dagger M}) v$ to within additive $\epsilon$ in roughly $O(s^{2d}/\eps^2)$ time. Since $x^{-1}$ can be well-approximated in the interval $[-1,1] \setminus [-\kappa^{-1}, \kappa^{-1}]$ by a degree $d = O(\kappa \log(\kappa/\eps))$ polynomial, this essentially allows us to estimate $u^\dagger M^{-1} v$. This is not quite sampling from $M^{-1} v$ but reasonably close. Furthermore, it suggests that to achieve an exponential separation between QIC and quantum algorithms, we probably need to consider $\kappa$ that is at least logarithmic in $n$. See the discussion below for a similar recent result by Montanaro and Shao and more details.

In light of the above, it is clear that for many settings of parameters $\kappa, \kappa_F$ and $\eps$, there is still an exponential gap between the quantum and QIC upper bounds. However it has still not been proved unconditionally that such a gap is inherent.

\paragraph{Our Result.}
We show the following strong lower bound for QIC algorithms
\begin{theorem}
\label{thm:main}
There is a constant $c\geq 1$, such that for any integers $n,k \geq c$, it holds for any QIC algorithm $\cA$ with precision $\eps \leq 2^{-ck}$ for linear systems, that there exists a full rank $n \times n$ symmetric real matrix $M$ with condition number $\kappa \leq c \ln n$ and $3$-sparse rows and columns, such that $\cA$ must make $c^{-1} n^{1-1/k}$ queries to $SQ(M)$ on the linear system $Mx=e_1$.
\end{theorem}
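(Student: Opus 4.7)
The plan is to establish the lower bound via Yao's principle: exhibit a distribution $\mathcal{D}$ over hard inputs $(M,e_1)$ and show that any deterministic algorithm using $q = o(n^{1/4}/\ln^{3/2} n)$ queries to $SQ(M)$ produces a sample at total-variation distance $\geq 1/3$ from the required distribution $i \mapsto x_i^2/\|x\|^2$, in expectation over $M \sim \mathcal{D}$; this gives the randomized worst-case lower bound.

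The distribution $\mathcal{D}$ will embed a hidden substructure determined by a uniformly random injection $\pi \colon [L] \to [n]$ with $\pi(1) = 1$, for an appropriate $L = \Theta(n^{1/4})$. The matrix $M_\pi$ will consist of a carefully weighted, symmetric, $3$-sparse block supported on $\pi([L])$, together with a scalar diagonal on the remaining $n-L$ coordinates; the weights inside the block must be engineered so that $x = M_\pi^{-1} e_1$ places a constant fraction of $\|x\|^2$ on $\Omega(L)$ of the positions $\pi(i)$ with $i \geq 2$. To neutralize any asymmetry that an $SQ$ oracle could exploit, all row-norms, column-norms, and entry magnitudes in $M_\pi$ will be equalized up to constants, so that row/column samples reduce to uniform samples over the $\leq 3$ nonzero positions in that row or column, and norm-sampling reveals no information about $\pi$. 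The free parameters are tuned so that $M_\pi$ is symmetric, full rank, $3$-sparse, and of condition number $\kappa \leq c \ln n$.

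Given $\mathcal{D}$, the argument has two stages. First, a spectral analysis shows that any $\eps$-accurate sampler for $\tilde{x}$ with $\eps \leq (c\sqrt{\ln n})^{-1}$ must output an element of $\pi([L]) \setminus \{1\}$ with $\Omega(1)$ probability and must, moreover, approximate the target distribution on $\pi([L])$ to within TV error $O(\eps)$; failing to identify $\Omega(L/\ln n)$ of these positions incurs TV error above the permitted $\eps$. Second, a transcript/coupling argument shows that any algorithm making $q$ queries to $SQ(M_\pi)$ can learn the value of $\pi$ at only $O(q)$ indices: equal row-norms make each row- or column-sample an essentially uniform choice among the $\leq 3$ neighbors of the queried vertex in $M_\pi$'s graph, each entry query confirms only one pair, and the hidden block is connected to the known universe only through vertex $1$, so the set of identified $\pi$-labels grows by $O(1)$ per query. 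Combining the two stages yields $q = \Omega(L / \ln n)$, and careful tracking of the $\eps$-precision through the TV accounting contributes an additional $1/\sqrt{\ln n}$ factor, giving the stated $\Omega(n^{1/4}/\ln^{3/2} n)$ bound.

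The principal obstacle is the spectral engineering in stage one: constructing a $3$-sparse, symmetric, $O(\ln n)$-conditioned block of side $L = \Theta(n^{1/4})$ whose inverse applied to $e_1$ has significant $\ell_2^2$ mass on $\Omega(L)$ specific positions. Naive uniform constructions such as $M = \lambda I - A$ on paths or cycles confine $M^{-1} e_1$'s mass within decay length $O(\sqrt{\kappa}) = O(\sqrt{\ln n}) \ll L$ of vertex $1$, which is far too small. Overcoming this likely requires graded, non-uniform edge weights -- perhaps a layered ``ladder'' structure along $\pi(1), \pi(2), \dots, \pi(L)$ -- designed so that the smallest eigenvalues of the block are associated with eigenvectors delocalized across all of $\pi([L])$, while keeping the per-vertex degree at most two within the block. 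Threading this trade-off within the $3$-sparsity budget is the delicate heart of the construction.
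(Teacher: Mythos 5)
Your approach is genuinely different from the paper's and, more importantly, contains a gap you yourself put your finger on but then do not resolve; the resolution you sketch cannot work. You observe, correctly, that for a sparse well-conditioned matrix of the form ``$\lambda I - A$'' on a path or cycle, the inverse applied to $e_1$ decays within $O(\sqrt{\kappa}) = O(\sqrt{\ln n})$ steps of vertex~$1$, which is far too short for the mass to reach $\Theta(n^{1/4})$ hidden positions. You then propose to fix this with ``graded, non-uniform edge weights'' on a ``layered ladder'' that keeps ``per-vertex degree at most two within the block.'' This cannot work: a subgraph of maximum degree two is a disjoint union of paths and cycles, and for any second-order linear recurrence $a_{i}\phi_{i+1} - \lambda\phi_i + a_{i-1}\phi_{i-1} = 0$ with $\Theta(1)$ coefficients (which you are forced into if row norms are to be equalized so that $SQ$ samples leak nothing), the characteristic-root analysis still yields exponential decay at rate $\Omega(\sqrt{\sigma_{\min}/\sigma_{\max}}) = \Omega(1/\sqrt{\kappa})$. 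No choice of $\Theta(1)$ weights on a one-dimensional chain spreads $\ell_2^2$ mass over more than $O(\sqrt{\kappa}) = O(\sqrt{\ln n})$ positions. The paper's fix is structurally different and essential: it uses a pair of perfect binary \emph{trees}, where level $j$ has $2^j$ nodes. By tuning $\lambda$ close to $2\sqrt{d-1} = \sqrt{8}$ so that $|\phi_j| \approx (1/\sqrt 2)^j |\phi_0|$, the per-level mass $2^j \phi_j^2$ stays constant, and the root of the far tree carries an inverse-polylogarithmic fraction of $\|x\|^2$ (Lemma~\ref{lem:largexi}); the cancellation between exponential branching and exponential decay is exactly what you need and exactly what a ladder can never give.

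There is a second, structural difference. You propose to prove the combinatorial step (``the set of identified $\pi$-labels grows by $O(1)$ per query'') directly via a transcript/coupling argument under Yao's principle. The paper does not attempt this: it reduces instead to two existing oracle-query lower bounds whose proofs are themselves substantial, either Childs et al.'s random-walk-on-glued-trees theorem (Theorem~\ref{thm:childs}) or the classical hardness of Simon's problem. In the glued-trees case, the $2^{\Omega(n)}$ lower bound is not a per-query accounting of discovered labels; it hinges on the uniformly random alternating cycle joining the leaves, which makes it exponentially unlikely for any classical query strategy to traverse from $T_1$ to $T_2$. Your sketch would have to rediscover a comparable argument from scratch. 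In short: the reduction-to-oracle-problem route lets the paper inherit both the information-theoretic hardness and, crucially, the exponential level growth needed to defeat the decay of $M^{-1}e_1$; your direct construction has neither piece in place, and the ladder idea in particular is a dead end for the reason you already diagnosed.
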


Observe that the complexity of the best quantum algorithm~\eqref{eq:q} for this setting of $s, \kappa$ and $\eps$ is just $\ln(n)\ln \ln n$, hence the claimed exponential separation. Furthermore, the matrix $M$ is extremely sparse, with only $s=3$ non-zeroes per row and column, and the vector $y$ in the linear system $Mx=y$ is simply the first standard unit vector $e_1$. Finally, the lower bound holds even for a constant precision $\eps$.

Let us also remark that our lower bound also holds in the well-studied \emph{sparse access model}, see e.g.~\cite{blockencoding,sparse2}. In the sparse access model, we can query for the $i$'th non-zero of any row or column of $M$ and all rows and columns have at most $s$ non-zeros. 

\paragraph{Previous Hardness of Solving Linear Systems.}
In the seminal HHL paper~\cite{HHL} that introduced the first quantum algorithm for solving linear systems, the authors also proved conditional lower bounds for classical algorithms. Concretely, they proved that solving a sparse and well-conditioned linear system is BQP-complete (bounded-error quantum polynomial time). For this result, the assumption is that the matrix is presented to the algorithm as a short binary encoding (a $\poly(\ln n)$ bit encoding of an $n \times n$ matrix). Assuming the widely-believed conjecture that BQP$\neq$P, this implies that there is no polynomial time classical algorithm for the same problem and input representation. Since QIC access to the matrix can be efficiently simulated from such a binary encoding, this implies a \emph{conditional} separation between quantum and QIC algorithms for linear systems. Our new lower bound is however \emph{unconditional} and gives an explicit gap ($\ln(n) \ln \ln n$ vs.\ $n^{1-1/k}$ for any constant $k$) in terms of the number of oracle queries without relying on computational hardness assumptions. However, it is reasonable to say that the HHL result hints that an unconditional separation was within reach.

\paragraph{Previous Lower Bounds for QIC Algorithms.}
Prior to our work, there has been several works proving lower bounds for QIC and related algorithms, although none of them establishing the exponential separation for linear systems that we give in this work. In the following, we review the most relevant such works.

Recent work by Mande and Shao~\cite{LBfromComm} also study linear systems among several other problems. Using reductions from number-in-hand multiparty communication complexity~\cite{verbin}, they prove a number of lower bounds for QIC algorithms for linear regression (and systems), supervised clustering, PCA, recommender systems and Hamiltonian simulation. Their lower bounds are of the form $\tilde{\Omega}(\kappa_F^2)$, but only for problems where the best known quantum algorithms are no better than $\tilde{O}(\kappa_F)$, thus establishing quadratic separations compared to our exponential separation. Let us also remark that our lower bound proof takes a completely different approach, instead reducing from a problem of random walks by Childs et al.~\cite{quantumWalk}, or alternatively, from Simon's problem~\cite{simon}.

Andoni, Krauthgamer and Pogrow~\cite{andoni} also studied the problem of approximating the solution to a linear system in sub-linear time via sampling access to the non-zero entries of $M$ and $y$. However, instead of requiring that the output index is sampled from a vector $\tilde{x}$ with $\|x - \tilde{x}\| \leq \eps \|x\|$, they make the stronger requirement that for any query index $i$, the algorithm must approximate $x_i$ to within additive $\eps \|x\|_\infty$. Note that this is both a harder problem in terms of the goal of approximating every single entry of $x$, and the error guarantee in terms of $\eps \|x\|_\infty$ is much stricter than our $\eps \|x\|$. Proving lower bounds for their problem is thus seemingly an easier task. In particular, their hard instance is trivial to solve with additive $\eps \|x\|$ error for $\eps = n^{-o(1)}$ as one can simply output $0$ to approximate $x_i$ for any $i$. 

Very recently, Montanaro and Shao~\cite{polyentry} studied the sparse access model for the problem of estimating entries of $f(M)$. Here $f : [-1,1] \to [-1,1]$ is a function that can be approximated by a polynomial and $f(M)$ for a Hermitian $M$ with $\|M\| \leq 1$ (all eigenvalues bounded by $1$ in absolute value) with eigendecomposition $U D U^\dagger$, is the matrix $U f(D) U^\dagger$ with $f$ applied entrywise on the diagonal entries of $D$. In particular, for $f(x) = x^{-1}$, we have that $f(M)$ is the inverse $M^{-1}$.  Montanaro and Shao show that the quantum query complexity of estimating entries of $f(M)$ is bounded by a polynomial in $\adeg_\eps(f)$, where $\adeg_\eps(f)$ is the least degree of a real-valued polynomial $P$ such that $|P(x)-f(x)| \leq \eps$ for all $x \in [-1,1]$. On the other hand, they also prove that classical algorithms must query $\exp(\Omega(\adeg_\eps(f)))$ many entries. This result does not directly apply to the matrix inverse problem since $f(x)=x^{-1}$ is not bounded by $1$ in absolute value for $x \in [-1,1]$. Montanaro and Shao discuss this shortcoming and observe that if all eigenvalues of $M$ are at least $\kappa^{-1}$ in absolute value (corresponding to condition number $\kappa$), then $f(x) = x^{-1}$ can be approximated in $[-1,1] \setminus [-\kappa^{-1},\kappa^{-1}]$ by a polynomial of degree $O(\kappa \log(\kappa/\eps))$. However, this is an upper bound, not a lower bound on the approximate degree, and they mention that they are not aware of any lower bound on the approximate degree of $f(x)=x^{-1}$ when restricted to $[-1,1] \setminus [-\kappa^{-1},\kappa^{-1}]$. An appropriate rescaling of $M$ together with this upper bound result thus hints that the condition number $\kappa$ must be at least logarithmic in the matrix size to establish an exponential separation for estimating entries of $M^{-1}$ to within additive $\eps$. On the other hand, for general $f$ with approximate degree at least logarithmic, their results give exponential separations between classical and quantum algorithms for estimating entries of $f(M)$ in the sparse access model.

\section{Separation}
We prove our lower bound result in Theorem~\ref{thm:main} via a reduction from either a problem by Childs et al.~\cite{quantumWalk} on random walks in graphs, or from the $k$-Forrelation problem~\cite{forrelationIntro}. While the reduction from $k$-Forrelation gives the tightest lower bound, we believe both reductions have value. In particular, we find that the reduction from random walks is simpler and more self-contained. Furthermore, the proof via random walks provides novel new insights into the graph construction by Childs et al.~\cite{quantumWalk}. We hope these insights may find further applications and have therefore chosen to include both reductions. We start by presenting our reduction from random walks.

Childs et al.~\cite{quantumWalk} study the following oracle query problem. There is an unknown input graph. The graph is obtained by constructing two perfect binary trees $T_1$ and $T_2$ of height $n$ each, i.e.\ they have $2^n$ leaves. The leaves of the two trees are connected by a uniformly at random chosen alternating cycle. That is, if we fix an arbitrary leaf $\ell_1$ in tree $T_1$, then the cycle is obtained by connecting $\ell_1$ to a uniform random leaf in $T_2$, that leaf is then again connected to a uniform random remaining leaf in $T_1$ and so forth, always alternating between the two trees. When no more leaves remain and we are at leaf $\ell_2$ in $T_2$, the cycle is completed by adding the edge back to $\ell_1$.

The $N=2^{n+2}-2$ nodes of $T_1$ and $T_2$ are now assigned uniform random and distinct $2n$ bit labels, except the root of $T_1$ that is assigned the all-0 label. Call the resulting random graph $G_n$.

Childs et al.\ now consider the following game: We are given query access to an oracle $\Oracle$. Upon receiving a $2n$-bit query string $x$, $\Oracle$ either returns that no node of the two trees has the label $x$, or if such a node exists, the labels of its neighbors are returned. The game is won if $\Oracle$ is queried with the label of the root of $T_2$.  Otherwise it is lost. Throughout the paper, we use $\roottwo$ to denote the (random) $2n$-bit label of the root of $T_2$. Childs et al.~\cite{quantumWalk} prove the following lower bound for any classical algorithm that accesses the graph $G_n$ only through the oracle $\Oracle$:
\begin{theorem}[Childs et al.~\cite{quantumWalk}]
\label{thm:childs}
    Any classical algorithm that queries $\Oracle$ at most $2^{n/6}$ times wins with probability at most $4 \cdot 2^{-n/6}$.
\end{theorem}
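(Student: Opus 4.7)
The plan is to separate the winning probability into two contributions: a lucky-guess contribution from querying labels the algorithm has not yet seen, and an exploration contribution from using the graph structure via previously revealed labels.

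Call a query on label $x$ \emph{fresh} if $x \neq 0^{2n}$ and no previous response from $\Oracle$ has contained $x$; otherwise call it \emph{revisiting}. Since the labels of the $N = 2^{n+2}-2$ non-root vertices are uniform on $\{0,1\}^{2n}$ conditional on the currently revealed labels, a fresh query matches some yet-unrevealed valid label with probability at most $N / 2^{2n} \leq 4 \cdot 2^{-n}$. A union bound shows that over $q = 2^{n/6}$ queries the algorithm makes at least one successful fresh query with probability at most $4 \cdot 2^{-5n/6}$, which is asymptotically dominated by the target $4 \cdot 2^{-n/6}$.

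Assume now that every query is revisiting. Then the set of labels the algorithm has seen corresponds to a connected subgraph $H \subseteq G_n$ containing the root of $T_1$ with $|H| \leq 3q+1$. The algorithm wins only if $\roottwo \in H$, so it remains to bound $\Pr[\roottwo \in H]$ by $O(2^{-n/6})$ for any adaptive strategy. I would prove this via a lazy-revelation coupling: expose the random labels and the random alternating cycle only as needed to answer queries, leaving unrevealed parts uniformly distributed. The key structural fact is a local indistinguishability property. At any revealed degree-3 vertex, the two not-yet-explored neighbors are symmetric from the algorithm's view: conditional on the current transcript, each is equally likely to be a tree edge stepping toward or away from $\roottwo$ or a cycle edge hopping to the other tree, because locally all degree-3 neighborhoods are identically distributed under the random labeling and random cycle matching. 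Consequently, a suitable potential function tracking the graph distance from the revealed frontier to $\roottwo$ behaves like an essentially unbiased random walk on an interval of length $\Theta(n)$, and standard hitting-time estimates should yield a bound of the form $O(q^2/2^{n/2})$ on the probability of ever reaching $\roottwo$; for $q = 2^{n/6}$ this is $O(2^{-n/6})$, which combined with the fresh-query contribution gives the claimed $4 \cdot 2^{-n/6}$.

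The main obstacle I expect is making the local indistinguishability argument robust to adaptivity. The algorithm can revisit vertices, piece together global information from many queries, and interleave exploration of the two trees through the cycle; justifying the coupling therefore requires carefully tracking which random labels and which cycle edges have been exposed so far and arguing that the remaining randomness still keeps the explored subgraph far from $\roottwo$ throughout the entire execution. Handling the cycle edges is especially delicate, since revealing part of the matching biases the conditional distribution on the rest in subtle ways that must be controlled to preserve the walk-like evolution of the potential.
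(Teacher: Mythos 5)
This is a theorem the paper \emph{cites} from Childs et al.\ without reproving it, so there is no in-paper proof to compare against; what follows is an assessment of your sketch on its own merits.

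Your decomposition into fresh and revisiting queries, the lazy-revelation coupling, and the ``$50/50$ at a degree-$3$ vertex'' symmetry are all genuine ingredients of the Childs et al.\ argument, and your fresh-query bound is correct. But the second half of the sketch, where all the difficulty lives, has two concrete gaps. First, the entire local-indistinguishability argument only holds as long as the explored subgraph remains a \emph{tree}. The moment the algorithm closes a cycle (finds a second path between two already-revealed vertices), the symmetry breaks and the conditional distribution of the embedding is no longer uniform; Childs et al.\ therefore need, as a central lemma, a bound on the probability that the explored subgraph ever contains a cycle. Your sketch never rules this out, and it is exactly where the $2^{n/6}$-vs-$2^{-n/6}$ tradeoff comes from: the cycle-collision probability after $q$ queries scales like $q^2$ over the number of unrevealed nodes at the relevant column, which is what forces $q \lesssim 2^{n/6}$. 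Second, the claimed $O(q^2/2^{n/2})$ hitting-probability bound is asserted but not derived; the ``distance to $\roottwo$ behaves like an unbiased random walk'' picture is heuristically reasonable (it is essentially how Childs et al.\ track the \emph{column} of newly revealed vertices), but a plain gambler's-ruin estimate on an interval of length $\Theta(n)$ does not give a $2^{-\Theta(n)}$ bound, and the algorithm is not a passive walker: it chooses which frontier vertex to extend, so the ``walk'' is a collection of $q$ adaptively grown branches, not one trajectory. Making this precise requires the column-by-column counting argument (or equivalently a union bound over embedded trees) that you gesture at but do not supply. As written, the proposal identifies the right structural ideas but does not contain a proof.
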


The key idea in our lower bound proof is to use an efficient QIC algorithm for linear systems to obtain an efficient classical algorithm for the above random walk game. As a technical remark, Childs et al.\ also included a random \emph{color} on each edge of $G_n$ and this color was also returned by $\Oracle$. The colors were exploited in a quantum upper bound making only $\poly(n)$ queries to the oracle to win the game with good probability. Since providing an algorithm with less information only makes the problem harder, the lower bound in Theorem~\ref{thm:childs} clearly holds for our variant without colors as well.

\subsection{Reduction from Random Walks}
\label{sec:reduct}
Assume we have a QIC algorithm $\cA$ for linear systems that makes $T(M,\eps)$ queries to $SQ(M)$ to sample an index from $\tilde{x}$ such that each index $i$ is sampled with probability $\tilde{x}_i^2/\|\tilde{x}\|^2$ for a $\tilde{x}$ satisfying $\|\tilde{x}-x\| \leq \eps \|x\|$ for a sufficiently small $\eps>0$. Here $x = M^{-1}y$ is the solution to the equation $Mx=y$. Our idea is to carefully choose a matrix $M$ (and vector $y$) depending on the random graph $G_n$ represented by $\Oracle$, such that simulating all calls of $\cA$ to $SQ(M)$ and $SQ(y)$ via calls to the oracle $\Oracle$, $\cA$ will output a sample index $i$ that with sufficiently large probability equals the index $\roottwo$ of the root of $T_2$. Said differently, the coordinate $\tilde{x}_{\roottwo}$ of $\tilde{x}$ corresponding to the root $\roottwo$ of $T_2$ is sufficiently large. Making a final query to $\Oracle$ with $i$ then wins the game with good probability.

Let us first describe the simulation and the matrix $M$. We consider the $2^{2n} \times 2^{2n}$ matrix $M$ whose rows and columns correspond to the possible $2n$-bit labels in the graph $G_n$. Let $B$ be the $N \times N$ submatrix of $M$ corresponding to the (random) labels of the $N=2^{n+2}-2$ actual nodes in $G_n$. If $A$ denotes the adjacency matrix of the graph $G_n$ (nodes ordered as in $B$), then we let $B = \lambda I - A$ for a parameter $\lambda$ to be determined. For all rows/columns of $M$ not corresponding to nodes in $G_n$, we let the diagonal entry be $\sqrt{\lambda^2 + 3}$ and all off-diagonals be $0$. Note that we are not explicitly computing the matrix $M$ to begin with. Instead, we merely argue that by querying $\Oracle$, we can simulate $\cA$ as if given access to $SQ(M)$.

Clearly $M$ is symmetric and real and thus has real eigenvalues and eigenvectors. Furthermore, we will choose $\lambda$ such that $M$ has full rank (which is if and only if all its eigenvalues are non-zero). Also observe that every row/column not corresponding to the two roots of $T_1$ and $T_2$, has norm $\sqrt{\lambda^2 + 3}$ (every node of $T_1$ and $T_2$, except the roots, have degree $3$).

Our goal is now to use the QIC algorithm $\cA$ for linear systems with $SQ(M)$ and $SQ(e_{0 \cdots 0})$ as input, to give an algorithm for the oracle query game with oracle $\Oracle$ for $G_n$. Here $e_{0 \cdots 0}$ is the standard unit vector corresponding to the root node of $T_1$ which always has the all-0 label. For short, we will assume this is the first row/column of $M$ and write $e_1 = e_{0 \cdots 0}$. 

\paragraph{Simulating $SQ(M)$.}
We now argue how to simulate each of the queries on $SQ(M)$ and $SQ(e_1)$ made by $\cA$ via calls to $\Oracle$. $SQ(e_1)$ is trivial to implement as $e_1$ is known and requires no calls to $\Oracle$. 

For $SQ(M)$, to sample and index from a row $M_{i, \star}$ or column $M_{\star, j}$, we query $\Oracle$ for all nodes adjacent to $i$ (or $j$). If $\Oracle$ returns that $i$ is not a valid label, we return the entry $i$ (corresponding to sampling the only non-zero entry of $M_{i, \star}$, namely the diagonal). Otherwise, $\Oracle$ returns the list of neighbors of $i$. We know that row $i$ (column $j$) has $\lambda$ in the diagonal and $-1$ in all entries corresponding to neighbors. We can thus sample from $M_{i, \star}$ with the correct distribution and pass the result to $\cA$. To query an entry $M_{i,j}$ of $M$, we simply query $\Oracle$ for the node $i$. If $\Oracle$ returns that $i$ is not a valid label, we return $\sqrt{\lambda^2 + 3}$ if $j=i$ and otherwise return $0$. If $\Oracle$ returns the neighbors of $i$, then if $j=i$, we return $\lambda$, if $j$ is among the neighbors, we return $-1$ and otherwise we return $0$.

We support sampling from $r(M)$ and $c(M)$ by returning a uniform random index among the $2^{2n}$ rows/columns. This requires no queries to $\Oracle$. Note that this \emph{almost} samples from the correct distribution: Every node $v$ of $G_n$ that is not a root of the two trees, is adjacent to three nodes. Hence the norm of the corresponding row and column is precisely $\sqrt{\lambda^2 + 3}$. This is the same norm as all rows and columns of $M$ not corresponding to $G_n$ (i.e.\ rows and columns outside the submatrix $B$). Hence the distribution of a sample from $r(M)$ and $c(M)$ is uniform random if we condition on not sampling an index corresponding to a root. Conditioned on never sampling a root in our simulation, the simulation of all samples from $r(M)$ and $c(M)$ follow the same distribution as if we used the correct sampling probabilities for $r(M)$ and $c(M)$. The probability we sample one of the two roots is at most $2 T(M,\eps) 2^{-2n}$ and we subtract this from the success probability of the simulation.

Finally, when the simulation of $\cA$ terminates with an output index $i$, we query $\Oracle$ for $i$ (with the hope that $i$ equals the index $\roottwo$ of the root of $T_2$). We thus have
\begin{observation}
\label{obs:sim}
Given a precision $\eps>0$ and a QIC algorithm $\cA$ for linear systems making $T(M,\eps)$ queries to $SQ(M)$, there is a classical algorithm for random walks in two binary trees that makes at most $T(M,\eps)+1$ queries to the oracle $\Oracle$, where $M$ is the matrix defined from $G_n$. Furthermore, it wins the game with probability at least $\tilde{x}_{\roottwo}^2/\|\tilde{x}\|^2 - 2 T(M,\eps)2^{-2n}$ for a $\tilde{x}$ satisfying $\|\tilde{x}-x\| \leq \eps \|x\|$ with $x=M^{-1}e_1$ the solution to the linear system $Mx=e_1$ and $\roottwo$ the index of the root of $T_2$.
\end{observation}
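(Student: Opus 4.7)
My plan is to turn the simulation described in the paragraphs preceding the observation into a formal reduction, verify its correctness on a ``good'' event, and bound the probability of the bad event by a union bound. Let $E$ denote the event that no sample drawn from $r(M)$ or $c(M)$ during the simulation returns one of the two root indices. I would decompose the proof into three pieces: (i) every type of $SQ(M)$ query is handled by at most one call to $\Oracle$, (ii) conditioned on $E$, the simulated oracle is distributed identically to a true $SQ(M)$ oracle, and (iii) $\Pr[\bar E] \leq 2 T(M,\eps) 2^{-2n}$.

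For (i), I would walk through the query types. An entry query $M_{i,j}$ needs one call $\Oracle(i)$: if $i$ is not a valid label, $M_{i,j}$ is determined by the diagonal rule, and if $i$ is valid the returned neighbor list together with $B = \lambda I - A$ determines $M_{i,j}$. A row or column sampling (or norm) query likewise needs one call $\Oracle(i)$, from which the at most four nonzero entries and the row norm can be read off. A sampling query on $r(M)$ or $c(M)$ is answered by returning a uniform index in $[2^{2n}]$ with no oracle call, and the norms $\|r(M)\|$, $\|c(M)\|$ are constants determinable in closed form from the construction.

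For (ii), the key observation is that the construction was engineered so that every non-root row of $M$ has norm exactly $\sqrt{\lambda^2+3}$: this holds for non-root rows inside the submatrix $B$ (degree $3$ in $G_n$) and for all rows outside $B$ (only a diagonal entry of value $\sqrt{\lambda^2+3}$). Consequently the true distribution of an $r(M)$ sample is uniform over non-root indices once we condition on not hitting a root, which is exactly what the simulator produces on $E$. All other simulated answers are deterministic functions of $\Oracle$ outputs and hence coincide with the real $SQ(M)$ answers unconditionally. Thus on $E$ the transcript of $\cA$ against the simulator is distributed identically to its transcript against a true $SQ(M)$ oracle, and in particular the output index $i$ satisfies $\Pr[i = \roottwo \mid E] = \tilde x_{\roottwo}^2/\|\tilde x\|^2$. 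For (iii), each of the at most $T(M,\eps)$ samples from $r(M)$ or $c(M)$ hits a root with probability $2 \cdot 2^{-2n}$, and a union bound gives the stated bound on $\Pr[\bar E]$.

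The proof concludes by spending one extra $\Oracle$ query on the returned index $i$; this wins the game iff $i = \roottwo$. The total query count is therefore $T(M,\eps)+1$, and the success probability is at least $\Pr[i=\roottwo \mid E]\Pr[E] \geq \tilde x_{\roottwo}^2/\|\tilde x\|^2 - 2T(M,\eps)2^{-2n}$. The only non-mechanical step is (ii): one must check that the fake rows outside $B$ were deliberately given diagonal entry $\sqrt{\lambda^2+3}$ precisely so that uniform sampling on $[2^{2n}]$ matches the true conditional distribution of $r(M)$ on $E$—this is really the reason behind that peculiar choice of diagonal value in the definition of $M$.
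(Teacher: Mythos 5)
Your proposal follows the paper's own argument essentially verbatim: simulate each $SQ(M)$ query with at most one $\Oracle$ call, answer $r(M)/c(M)$ sampling by a uniform index in $[2^{2n}]$ with no oracle call, union-bound the probability of ever drawing a root, and spend one extra oracle query on the returned index. The decomposition (i)--(iii) and the reliance on all non-root rows having norm exactly $\sqrt{\lambda^2+3}$ is exactly the paper's reasoning.

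One small point of precision, which the paper also glosses over: your step (ii) asserts that, conditioned on $E$, the simulated transcript is distributed as the transcript against a true $SQ(M)$ oracle, hence $\Pr[i=\roottwo\mid E]=\tilde x_{\roottwo}^2/\|\tilde x\|^2$. This is not literally correct, because the true $r(M)/c(M)$ sampler also occasionally returns a root; what is true is that the simulated transcript conditioned on $E$ matches the \emph{real} transcript conditioned on the analogous event $F$ (no real sample hits a root), so $\Pr[i=\roottwo\mid E]=\Pr_{\mathrm{real}}[i=\roottwo\mid F]$, which need not equal $\tilde x_{\roottwo}^2/\|\tilde x\|^2$. The stated final inequality is nevertheless correct, and the cleanest way to see it is by coupling: draw $U$ uniformly from $[2^{2n}]$; if $U$ is a non-root, set the real sample $R=U$ (legal since the real conditional law given non-root is uniform over non-roots), and otherwise draw $R$ from the appropriate residual distribution. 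Under this coupling, whenever every simulated draw is a non-root the two executions produce identical transcripts and hence identical outputs, so $\Pr_{\mathrm{sim}}[i=\roottwo]\geq\Pr_{\mathrm{real}}[i=\roottwo]-\Pr[\bar E]\geq \tilde x_{\roottwo}^2/\|\tilde x\|^2-2T(M,\eps)2^{-2n}$. This yields the claimed bound without the problematic intermediate equality.
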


What remains is thus to determine an appropriate $\lambda$, to argue that $M$ has a small condition number, to find a suitable $\eps>0$ and to show that $\tilde{x}_{\roottwo}^2/\|\tilde{x}\|^2$ is large. This is done via the following two auxiliary results

\begin{lemma}
\label{lem:eigC}
The adjacency matrix $A$ of the graph $G_n$ has no eigenvalues in the range $(\sqrt{8}, 3-2^{-n}]$.
\end{lemma}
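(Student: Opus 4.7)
My plan is to split the analysis along the orthogonal decomposition $\R^N = V_s \oplus V_a$, where $V_s$ consists of vectors taking a common value on each level of each tree ($\dim V_s = 2(n+1)$) and $V_a = V_s^\perp$. Both subspaces are $A$-invariant: applying $A$ to a level-symmetric vector produces another such vector because every internal non-root non-leaf has one parent one level above and two children one level below, every leaf has one parent plus two cycle-neighbors at the leaf-level of the \emph{other} tree, and each root has two children; $V_a$ then inherits invariance from the symmetry of $A$. Thus the spectrum of $A$ is the disjoint union of those of $A|_{V_s}$ and $A|_{V_a}$, and I would prove the lemma by showing (i) the only eigenvalue of $A|_{V_s}$ in $(\sqrt{8},3]$ exceeds $3-2^{-n}$, and (ii) the spectral radius of $A|_{V_a}$ is at most $\sqrt{8}$.

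For (i), parameterize $V_s$ by level values $(a_k)_{k=0}^n,(b_k)_{k=0}^n$ to reduce $A|_{V_s}$ to the 1D recurrence $a_{k-1}+2a_{k+1}=\mu a_k$ (similarly for $b$) with root boundary $2a_1=\mu a_0$ and cycle-modified leaf boundary $a_{n-1}+2b_n=\mu a_n$. The similarity $a_k=2^{-k/2}\alpha_k$ symmetrizes the coupling to $\sqrt{2}$, and the change of variables $\gamma_k=\alpha_k+\beta_k$, $\delta_k=\alpha_k-\beta_k$ decouples the two trees into two Chebyshev-type chains. For $\mu>\sqrt{8}$ write $\mu=2\sqrt{2}\cosh\phi$; the root BC forces $\gamma_k\propto\sinh((k+1)\phi)$, and the sum-mode leaf BC reduces to $\sinh(n\phi)/\sinh((n+1)\phi) = 2\cosh\phi - \sqrt{2}$, whose $n\to\infty$ limiting solution $\phi_0=\tfrac{1}{2}\ln 2$ corresponds to $\mu=3$. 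A first-order expansion around $\phi_0$ gives the finite-$n$ solution $\mu^\ast = 3 - 2^{-(n+2)} + O(4^{-n})$, strictly larger than $3-2^{-n}$. The difference-mode equation $\sinh(n\phi)/\sinh((n+1)\phi) = 2\cosh\phi+\sqrt{2}$ has LHS at most $1$ and RHS at least $2+\sqrt{2}$, so admits no solution with $\mu>\sqrt{8}$. Hence $A|_{V_s}$ contributes no eigenvalue to $(\sqrt{8}, 3-2^{-n}]$.

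For (ii) I refine the averaging to subtrees: for each eigenvector $\phi\in V_a$ with eigenvalue $\mu$, and each internal $w$ at level $k$ in $T_1$, the subtree level-sums $S^{(j)}_w = \sum_{v\in w\text{-subtree at level }k+j}\phi_v$ satisfy the Chebyshev recurrence $2S^{(j-1)}_w+S^{(j+1)}_w=\mu S^{(j)}_w$ for $1\le j\le n-k-1$, together with $\phi_{p(w)}+S^{(1)}_w=\mu\phi_w$ and the leaf relation $2S^{(n-k-1)}_w+R_w=\mu S^{(n-k)}_w$. The cycle contribution $R_w=\sum_u m_u^{(w)}\phi_u$ (summed over $T_2$-leaves with $m_u^{(w)}\in\{0,1,2\}$ counting cycle-neighbors in $w$'s subtree) is bounded by $\sqrt{2^{n-k+1}}\,\|\phi\|$ via Cauchy--Schwarz, while the general solution $S^{(j)}_w = A_w x_+^j + B_w x_-^j$ with $x_\pm=(\mu\pm\sqrt{\mu^2-8})/2$ yields, combined with the boundary relations, the per-vertex identity $\phi_w U_{n-k+2}-\phi_{p(w)}U_{n-k+1}=R_w$ where $U_j=(x_+^j-x_-^j)/(x_+-x_-)$. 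For $\mu>\sqrt{8}$ one has $x_+>\sqrt{2}$, so $U_{n-k+2}$ grows like $x_+^{n-k}$, asymptotically faster than the polynomial $2^{(n-k)/2}$ bound on $R_w$; starting from $\phi_\text{root}=0$ (forced by the $V_a$-constraint at level $0$), I would propagate this exponential-versus-polynomial mismatch inductively down through the tree via the per-vertex identity to force $\phi\equiv 0$, contradicting $\phi\neq 0$. The main obstacle is quantifying the mismatch for $\mu$ only marginally above $\sqrt{8}$, where $x_+/\sqrt{2}-1$ is small and the damping is weak; I expect this will require combining the per-vertex identities with the global vanishing of level-sums of $\phi$ to absorb the slowly-damped error contributions.
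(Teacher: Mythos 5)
Your $V_s\oplus V_a$ split is a sensible framing, and your part (i) treatment of the layered block $V_s$ is a valid but more computational route than the paper's. The paper also reduces $A|_{V_s}$ to a second-order linear recurrence over the $2n+2$ levels, but instead of locating the eigenvalue near $3$ by expanding a transcendental equation in $\cosh\phi$ around $\phi_0=\tfrac{1}{2}\ln 2$, it makes the cleaner observation that any layered eigenvector with eigenvalue $\lambda>\sqrt{8}$ has all level values $\phi_i$ of a single sign (this falls directly out of the closed-form solution $\phi_i=\alpha((\lambda-\Delta)/4)^i(1-((\lambda+\Delta)/(\lambda-\Delta))^i)$ with $\alpha\alpha'>0$ forced at the leaves). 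Two nonzero same-sign layered vectors cannot be orthogonal, so $A|_{V_s}$ has at most one eigenvalue above $\sqrt{8}$; combined with the Rayleigh-quotient bound $\lambda_1\ge 3-2/N>3-2^{-n}$ using the uniform vector, that eigenvalue is $\lambda_1$, and the $V_s$ case is done without ever pinning down $\lambda_1$ more precisely. Your expansion argument should give the same conclusion, but the sign-plus-orthogonality route is shorter and avoids tracking error terms in $\phi$.

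The genuine gap is in part (ii), and you flagged it yourself. The per-vertex subtree identity $\phi_w U_{n-k+2}-\phi_{p(w)}U_{n-k+1}=R_w$ plus Cauchy--Schwarz gives $|R_w|\lesssim 2^{(n-k)/2}\|\phi\|$, but for $\mu$ just above $\sqrt{8}$ the Chebyshev factor $U_{n-k+2}$ is only $\Theta((n-k)2^{(n-k)/2})$, so the per-vertex bound on $|\phi_w|$ you get is of order $\|\phi\|/(n-k)$ plus a roughly-contracting term; summing squares over the $2^k$ vertices at level $k$ does not yield anything below $\|\phi\|^2$, so no contradiction is reached. Absorbing the error using the vanishing level-sums, as you suggest, is not a small fix --- it would need a new idea. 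For comparison, the paper does not analyze $A|_{V_a}$ at all: its Observation~\ref{obs:layered} asserts that \emph{any} eigenvalue of $A$ has a layered eigenvector (by level-averaging an arbitrary real eigenvector), which, if accepted, makes the whole $V_a$ discussion unnecessary. Be aware, however, that the proof of that observation verifies only $(\lambda I-A)w=0$ and never rules out $w=0$, i.e.\ it implicitly assumes the eigenspace of $\lambda$ is not entirely contained in $V_a$. Your decomposition approach is exactly what one would do to fill in that assumption --- but as written, part (ii) is not close to complete, and in its current form the proposal does not establish the lemma.
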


\begin{lemma}
\label{lem:largexi}
If $\lambda$ is chosen as $\sqrt{8} + \gamma$ for a $0 < \gamma \leq \left(\frac{1}{16(n+2)}\right)^2$ and $n \geq c$ for a sufficiently large constant $c>0$, then $x_{\roottwo}^2 = \Omega(n^{-5} \|x\|^2)$ where $x = M^{-1}e_1$ is the solution to the linear system $Mx=e_1$ and $\roottwo$ is the index of the root of $T_2$.
\end{lemma}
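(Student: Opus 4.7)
My plan is to exploit column-symmetry to reduce $Bx=e_1$ (where $B=\lambda I - A$ is the restriction of $M$ to the $N$ node-indices of $G_n$) to a one-dimensional tridiagonal recurrence. Define the \emph{column} $c(v)$ of a node to be its depth in $T_1$ if $v \in T_1$, and $2n+1$ minus its depth in $T_2$ if $v \in T_2$, so $c(r_1)=0$ and $c(\roottwo)=2n+1$. Because of the perfect-binary-tree structure and the fact that each leaf has exactly one cross-tree neighbor from the cycle, the columns of the three neighbors of any node $v$ are determined entirely by $c(v)$. Hence the column-invariant ansatz $x_v=f(c(v))$ makes $(Bx)_v$ a function only of $c(v)$; matching $Bx=e_1$ column-by-column yields a $(2n+2)\times(2n+2)$ tridiagonal system for $f(0),\ldots,f(2n+1)$, whose solution then realizes the unique solution $x=B^{-1}e_1$ (invertibility of $B$ is guaranteed by Lemma~\ref{lem:eigC}).

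Next, I would rescale by the column sizes, letting $\phi(j)=2^{j/2}f(j)$ for $0\le j\le n$ and $\psi(j)=2^{(2n+1-j)/2}f(j)$ for $n+1\le j\le 2n+1$. A short computation shows that both $\phi$ and $\psi$ satisfy the symmetric three-term recurrence $g(j-1)+g(j+1)=\omega\,g(j)$ with $\omega:=\lambda/\sqrt{2}=2+\gamma/\sqrt{2}$, and the contribution of column $j$ to $\|x\|^2$ is exactly $\phi(j)^2$ (respectively $\psi(j)^2$). Writing $\omega=2\cosh\theta$, the hypothesis $\gamma\le(16(n+2))^{-2}$ gives $\theta=\Theta(\gamma^{1/2})$ with $\theta n$ bounded by a small absolute constant. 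The general solution is $\phi(j)=P\cosh(\theta j)+Q\sinh(\theta j)$, and the boundary condition at $\roottwo$ (namely $\lambda f(2n+1)=2f(2n)$) is automatically satisfied by the form $\psi(j)=R'\sinh(\theta(2n+2-j))$ for a single parameter $R'$. The boundary condition at $r_1$ together with the two leaf-matching conditions at columns $n$ and $n+1$ then form a $3\times 3$ linear system for $(P,Q,R')$.

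Solving this $3\times 3$ system (using standard hyperbolic identities, or equivalently a Taylor expansion in $\theta n$) yields $R'=[\theta(n^2+6n+7)]^{-1}(1+O((\theta n)^2))$, and hence $f(2n+1)=\psi(2n+1)=R'\sinh\theta=\Theta(1/n^2)$, together with $Q=\Theta(-1/(\theta n))$ and $P=\Theta(1)$. A direct norm estimate then gives $\sum_{j=0}^n\phi(j)^2=\Theta(n)$ (the profile of $\phi$ is essentially the linear ramp $(n-j)/(n\sqrt{2})$) and $\sum_{j=n+1}^{2n+1}\psi(j)^2=O(1/n)$, so that $\|x\|^2=\Theta(n)$ and $x_{\roottwo}^2/\|x\|^2=\Omega(1/n^5)$, as claimed. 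The main technical obstacle is verifying that the $(\theta n)^2$ correction terms from the $\cosh/\sinh$ expansions do not spoil the leading-order computation; since $\theta n$ is bounded by a small absolute constant, all such corrections are multiplicative $\Theta(1)$ factors. A secondary subtlety is the non-degeneracy of the $3\times 3$ system, which follows from its determinant being $\sinh\theta$ times a positive polynomial in $n$, non-vanishing for $\theta>0$.
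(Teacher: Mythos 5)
Your plan follows the same high-level route as the paper---a layered (column-invariant) ansatz for $x=B^{-1}e_1$, a second-order linear recurrence, matching at the leaf interface, and then a norm estimate---but the reparametrization is noticeably cleaner. The $2^{j/2}$ rescaling makes column $j$'s contribution to $\|x\|^2$ exactly $\phi(j)^2$ (resp.\ $\psi(j)^2$), symmetrizes the three-term recurrence, and writing $\omega=\lambda/\sqrt{2}=2\cosh\theta$ is equivalent to the paper's characteristic roots $(\lambda\pm\Delta)/4 = 2^{-1/2}e^{\pm\theta}$, where $\Delta=\sqrt{\lambda^2-8}$. It also makes the ``linear ramp in the $\theta\to 0$ limit'' picture visible, which the paper never articulates. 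Your target estimates $\|x\|^2=\Theta(n)$, $x_{\roottwo}^2=\Theta(n^{-4})$ are correct and in fact give a tight $\Theta(n^{-5})$ characterization where the paper only proves the needed $\Omega(n^{-5})$ bound.

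Two quantitative slips, neither fatal: in the $\theta\to 0$ limit the determinant of your $3\times 3$ system is $\theta^2(2-n^2)$, which is \emph{negative} for $n\ge 2$, not ``$\sinh\theta$ times a positive polynomial in $n$''; correspondingly $R'\theta=1/(2-n^2)\approx -1/n^2$, not $1/(n^2+6n+7)$. Only $|R'\theta|=\Theta(n^{-2})$ matters, so the conclusion survives. Also, non-degeneracy of the $3\times 3$ system does not actually require any determinant computation: $B$ is invertible by Lemma~\ref{lem:eigC}, the layered subspace is $B$-invariant, and the map $(P,Q,R')\mapsto(\phi,\psi)$ is injective for $\theta>0$, so uniqueness of $x$ forces uniqueness of $(P,Q,R')$.

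The one genuine gap is that ``all such corrections are multiplicative $\Theta(1)$ factors'' is asserted rather than proved. The system exhibits a leading-order cancellation (e.g.\ $(n+1)(2-n)-n=2-n^2$), and one must verify that the $O((\theta n)^2)$ Taylor remainders do not eat this cancellation or flip a sign; likewise one must check that $P+Q\theta(n+1)$, which is $\Theta(1/n)$ rather than $\Theta(1)$, is controlled. This verification is exactly the content of the paper's Claim~\ref{cl:approx} and Claim~\ref{cl:alphapositive}, and it occupies all of Section~\ref{sec:aux}; your bound $\theta(n+2)\le$ small constant (from $\gamma\le(16(n+2))^{-2}$) does make it routine, but it still has to be carried out explicitly for the proof to be complete.
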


Before proving these results and motivating the bounds they claim, we use them to complete our reduction and derive our lower bound. In light of Lemma~\ref{lem:largexi}, we choose $\lambda = \sqrt{8} + \gamma$ with $\gamma = 1/(16(n+2))^2$. Let us now analyse the condition number of $M$. First, note that the sum of absolute values in any row or column of $M$ is no more than $\max\{\sqrt{\lambda^2  + 3}, \lambda + 3\} \leq 6$. Thus the largest singular value of $M$ is at most $6$. For the smallest singular value, observe that $M$ is block-diagonal with $B$ in one block and $\sqrt{\lambda^2+3} \cdot I$ in the other. The latter has all singular values $\sqrt{\lambda^2 + 3} > 3$. For $B$, we first observe that any row and column of the adjacency matrix $A$ has sum of absolute values at most $3$. Hence the eigenvalues of $A$ lie in the range $[-3,3]$. From Lemma~\ref{lem:eigC}, we now have that all eigenvalues of $B=\lambda I-A$ lie in the ranges $[\lambda-3,\lambda-3+2^{-n}) \subseteq [-0.18, -0.17)$ (for $n$ sufficiently large) and $(\lambda - \sqrt{8}, \lambda+3] \subseteq (\gamma, 6)$. We thus have that the smallest singular value (smallest absolute value of an eigenvalue) of $M$ is at least $\gamma = 1/(16(n+2))^2$. The condition number $\kappa$ is thus at most $6 \cdot (16(n+2))^2$ (recall that the size of the matrix is $2^{2n}$, thus the condition number is only polylogarithmic in the matrix size).

Let us next analyse $\tilde{x}$. From Lemma~\ref{lem:largexi} we have that $|\tilde{x}_{\roottwo}| \geq |x_{\roottwo}| - \eps \|x\| = |x_{\roottwo}|-O(\eps n^{2.5} |x_{\roottwo}|)$. For $\eps \leq (c n^{2.5})^{-1}$ for sufficiently large constant $c$, this is at least $|x_{\roottwo}|/2 = \Omega(n^{-2.5} \|x\|)$ by Lemma~\ref{lem:largexi}. Furthermore, by the triangle inequality, we have $\|\tilde{x}\| \leq (1+\eps)\|x\|$, and hence $\tilde{x}_{\roottwo}^2/\|\tilde{x}\|^2 = \Omega(n^{-5})$. We thus win the random walk game with probability at least 
\[
\Omega(n^{-5})-2T(M,\eps)2^{-2n}.
\]
For $n$ sufficiently large, this implies that either $T(M,\eps) \geq 2^{n/6}$ or this success probability is greater than $4 \cdot 2^{-n/6}$. Thus Theorem~\ref{thm:childs} by Childs et al.~\cite{quantumWalk} and Observation~\ref{obs:sim} gives us that the number of queries, $T(M,\eps)+1$, must be at least $2^{n/6}$. This establishes Theorem~\ref{thm:main} with slightly worse parameters as the size of $M$ is $2^{2n} \times 2^{2n}$ (i.e.\ with $\kappa \leq c \ln^2 n$, $\eps \leq (c \ln^{2.5} n)^{-1}$, 4-sparse rows and columns and $\Omega(n^{1/12})$ queries on an $n \times n$ matrix). Our later reduction from $k$-Forrelation (in Section~\ref{sec:simon}), improves these parameters by polynomial factors (and $\eps$ needs only be less than a constant).

\paragraph{Motivating the Construction.}
Before we proceed to give proofs in Section~\ref{sec:randomwalk}, let us make a number of comments on the intuition behind our concrete choice of matrix $M$ and the bounds claimed in Lemma~\ref{lem:eigC} and Lemma~\ref{lem:largexi}. In light of the reduction from the random walk problem, it is clear that we should choose a matrix $M$ such that we can 1.) simulate $SQ(M)$ via $\Oracle$, 2.) guarantee a small condition number of $M$, and 3.) guarantee that $\tilde{x}^2_{\roottwo}/\|\tilde{x}\|^2$ is as large as possible.

Condition 1.\ naturally hints at using the adjacency matrix $A$ of $G_n$ as part of the construction of $M$ since querying $\Oracle$ precisely retrieves the neighbors and thus non-zero entries of a row or column of $A$. It further guarantees sparse rows and columns in $M$. For condition 2., we have to introduce something in addition to $A$ as the eigenvalues of $A$ lie in the range $[-3,3]$, but we have no guarantee that they are sufficiently bounded away from $0$. A natural choice is $B=\lambda I-A$ as this shifts the eigenvalues to lie in the range $[\lambda-3, \lambda+3]$. Setting the entries of $M$ outside the submatrix corresponding to the actual nodes in $G_n$ to $\sqrt{\lambda^2+3}$ is again a natural choice as this makes sampling from $c(M)$ and $r(M)$ trivial. From this alone, it would seem that $\lambda \gg 3$ would be a good choice. Unfortunately, such a choice of $\lambda$ would not guarantee condition 3. To get an intuition for why this is the case, recall that $\tilde{x}$ is $\eps$-close to $x = M^{-1}e_1$. Since $M$ is block-diagonal, with $B$ the block containing the first row and $\sqrt{\lambda^2+3} \cdot I$ the other block, $x$ is non-zero only on coordinates corresponding to the block $B$. Abusing notation, we thus write $x = B^{-1}e_1$. Thus what we are really interested in showing, is that $x_{\roottwo}^2/\|x\|^2$ is large for this $x$. Examining the linear system $Bx=e_1$, we see that all rows corresponding to internal nodes $v$ of $T_1$ and $T_2$ define an equality. The equality states that the entry $x_v$ corresponding to $v$ must satisfy $\lambda x_v - x_{p(v)} - x_{\ell(v)}-x_{r(v)}=0$ where $p(v)$ is the parent of $v$, $\ell(v)$ the left child and $r(v)$ the right child (for leaves of $T_1$ and $T_2$, the children are the neighboring leaves in the opposite tree). Since this pattern is symmetric across the nodes of $T_1$ and $T_2$, it is not surprising that $x$ is such that all nodes $v$ on the $j$'th level of $T_i$ have the same value $x_v = \psi_j^{(i)}$. Let us focus on the tree $T_2$ and simplify the notation by dropping the index and letting $\psi_j \gets \psi_j^{(2)}$. From the above, the $\psi_j$'s satisfy the constraints $\lambda \psi_{j+1} - \psi_{j} - 2 \psi_{j+2} = 0$. A recurrence $a \psi_{j+2} + b \psi_{j+1} + c \psi_j=0$ is known as a \emph{second degree linear recurrence} and when $b^2 > 4ac$ its solutions are of the form
\[
  \psi_j = \alpha \cdot \left( \frac{-b + \sqrt{b^2 - 4ac}}{2a}\right)^j + \beta \cdot \left( \frac{-b - \sqrt{b^2 - 4ac}}{2a}\right)^j,
\]
where $\alpha, \beta \in \R$ are such that $\psi_1$ and $\psi_2$ satisfy any initial conditions one might require. For our construction, we have $a=-2, b=\lambda$ and $c=-1$, resulting in
\begin{align*}
  \psi_j = \alpha \cdot \left( \frac{\lambda - \sqrt{\lambda^2 - 8}}{4}\right)^j + \beta \cdot \left( \frac{\lambda + \sqrt{\lambda^2 - 8}}{4}\right)^j.
\end{align*}
Since the root of $T_2$ does not have a parent, the corresponding equality from $Mx=e_1$ gives the initial condition $\lambda \psi_1 -2\psi_2 = 0$, implying $\psi_1 = (2/\lambda)\psi_2$. If we work out the details, this can be shown to imply $\alpha = -\beta$ in any solution to the recurrence and thus
\begin{align*} 
  \psi_j = \alpha \cdot \left(\left( \frac{\lambda -\sqrt{\lambda^2 - 8}}{4}\right)^j - \left( \frac{\lambda +\sqrt{\lambda^2 - 8}}{4}\right)^j\right).
\end{align*}
For $\lambda$ sufficiently larger than $\sqrt{8}$, the term $((\lambda + \sqrt{\lambda^2-8})/4)^j$ is at least a constant factor larger in absolute value than $((\lambda - \sqrt{\lambda^2-8})/4)^j$ and we have
\begin{align} 
  \label{eq:recur}
  \psi_j \approx -\alpha \cdot \left( \frac{\lambda + \sqrt{\lambda^2 - 8}}{4}\right)^j.
\end{align}
For $\lambda \geq 3$, we therefore roughly have that
$
  |\psi_j| \geq |\psi_{j-1}| \cdot (3 + \sqrt{9-8})/4 = |\psi_{j-1}|
$. But there are $2^j$ nodes of depth $j$ in $T_2$, thus implying that the magnitude of $x_{\roottwo}^2$, with $\roottwo$ the root of $T_2$, is no more than $\|x\|^2 2^{-n}$. This is far too small for both the sampling probability $x_{\roottwo}^2/\|x\|^2$ and the required precision $\eps$ such that simply setting $\tilde{x}_{\roottwo} \gets 0$ in $\tilde{x}$ does not violate $\|\tilde{x}-x\| \leq \eps \|x\|$.

To remedy this, consider again~\eqref{eq:recur} and let us understand for which values of $\lambda$ that $x_{\roottwo}^2$ is not significantly smaller than $\|x\|^2$. If we could choose $\lambda = \sqrt{8}$ (technically we require $b^2 > 4ac$ and thus $\lambda > \sqrt{8}$), then~\eqref{eq:recur} instead gives $|\psi_j| \approx |\psi_{j-1}| \cdot \sqrt{8}/4$
and thus $\|x\|^2 \approx \sum_{j=0}^{n-1} 2^j (\sqrt{8}/4)^{2j} x_{\roottwo}^2 = n x_{\roottwo}^2$. This is much better as the size of the matrix $M$ is $2^{2n}$ and thus an $\eps$ that is only inverse polylogarithmic in the matrix size suffices to sample $\roottwo$ with large enough probability to violate the lower bound for the random walk game. Notice also that choosing $\lambda = \sqrt{8} + c$ for any constant $c>0$ is insufficient to cancel the exponential growth in number of nodes of depth $j$ (as for $\lambda \geq 3$). Thus we are forced to choose $\lambda$ very close to $\sqrt{8}$.

Unfortunately choosing $\lambda \approx \sqrt{8}$ poses other problems. Concretely the eigenvalues of $B$ then lie in the range $[\sqrt{8}-3,\sqrt{8}+3]$, which contains $0$ and thus we are back at having no guarantee on the condition number. However even if $A$ has eigenvalues in the range $[-3,3]$, it is not given that there are eigenvalues spread over the entire range. In particular, since all nodes of $G_n$, except the roots, have the same degree $3$, we can almost think of $G_n$ as a $3$-regular graph. For $d$-regular graphs, the best we can hope for is that the graph is Ramanujan, i.e.\ all the eigenvalues of the adjacency matrix $A$, except the largest, are bounded by $2 \sqrt{d-1}$ in absolute value (the second largest eigenvalue for any $d$-regular graph is at least $2\sqrt{d-1}-o(1)$, see~\cite{RamanujanLB}). For $d=3$, this is precisely $\sqrt{8}$. In our case, the graph $G_n$ is sadly \emph{not} Ramanujan, as in addition to the largest eigenvalue of roughly $3$ (only roughly since our graph is not exactly $3$-regular), it also has an eigenvalue of roughly $-3$. However, in Lemma~\ref{lem:eigC} we basically show that this is the \emph{only} other eigenvalue that does not lie in the range $[-\sqrt{8}, \sqrt{8}]$.

In light of the above, we wish to choose $\lambda$ as close to $\sqrt{8}$ as possible, but not quite $\lambda=\sqrt{8}$ as we may then again risk having an eigenvalue arbitrarily close to $0$. If we instead choose $\lambda = \sqrt{8} + \gamma$ for a small $\gamma>0$, then~\eqref{eq:recur} gives us something along the lines of
\[
  |\psi_j| \approx |\psi_{j-1}| \cdot \left(\frac{\sqrt{8} + \sqrt{8 + 2 \sqrt{8} \gamma + \gamma^2 -8}}{4}\right) = |\psi_{j-1}| \cdot \frac{\sqrt{8}}{4} \cdot (1 + O(\sqrt{\gamma})),
\]
and thus $\|x\|^2 \approx \sum_{j=0}^{n-1} 2^j (\sqrt{8}/4)^{2j} (1 + O(\sqrt{\gamma}))^{2j} x_{\roottwo}^2 = n x_{\roottwo}^2 \exp(O(n \sqrt{\gamma}))$. Choosing $\gamma = O(1/n^2)$ as suggested in Lemma~\ref{lem:largexi} thus recovers the desirable guarantee $\|x\|^2 = O(x_{\roottwo}^2 n)$ while yielding a condition number of no more than $O(n^2)$. Since the size of the matrix is $2^{2n} \times 2^{2n}$, this is only polylogarithmic in the matrix size. This completes the intuition behind our choice of parameters and results.

What remains is thus to formalize the above intuition and prove Lemma~\ref{lem:eigC} and Lemma~\ref{lem:largexi}. This is postponed to Section~\ref{sec:randomwalk}, and in the next section, we instead give the alternative reduction from $k$-Forrelation.

\section{Reduction from $k$-Forrelation}
\label{sec:simon}
Here we give an alternative reduction, starting from the $k$-Forrelation problem~\cite{forrelationIntro}, which exhibits a maximum query complexity separation between quantum and randomized classical algorithms~\cite{forrelationLB1,forrelationLB2}, and using the ideas in the proof by Harrow, Hassidim and Lloyd~\cite{HHL} that solving linear systems is BQP-complete. We start by recalling the main ideas in the two works.

\paragraph{HHL's BQP-Hardness Reduction.}
In~\cite{HHL}, the authors consider a general reduction for simulating any quantum circuit via solving a linear system. We will outline their approach here using their notation. Assume we have a quantum circuit acting on $n$ qubits. Let $U_1,\dots,U_T$  denote the sequence of $2^n \times 2^n$ unitaries corresponding to the sequence of one- or two-qubit gates applied by the circuit on the initial state $|0 \rangle^{\otimes n}$. They now consider the matrix
\begin{align}
    U = \sum_{t=1}^T|t\rangle \langle t-1| \otimes U_t  + |t + T \rangle \langle t+T-1 |\otimes I + |t + 2T  \bmod 3T\rangle \langle t+2T-1 |\otimes U^\dagger_{T+1-t}.\label{eq:defU}
\end{align}
For intuition on this construction, think of the matrix as being partitioned into a $3T \times 3T$ grid of sub-matrices of size $2^n \times 2^n$ each and notice how $|i \rangle \langle j| \otimes A$ places the matrix $A$ as the sub-matrix in position $(i,j)$ (with $0$-indexed rows and columns $i,j \in [3T]$). A $3T \cdot 2^n$-dimensional vector is similarly partitioned into $3T$ blocks. For $1 \leq t \leq T$, observe that the vector $U^t |0 \rangle |0 \rangle^{\otimes n}$ has the state of the $n$ qubits after the first $t$ steps of computation stored in its $t+1$'st block, and all other blocks are $0$. Note that the first $|0 \rangle$ in $|0 \rangle |0 \rangle^{\otimes n}$ refers to the $3T$-dimensional standard unit vector with a $1$ in its first entry. For $T \leq t \leq 2T-1$, $U^t |0 \rangle |0 \rangle^{\otimes n}$ simply has the final state of the qubits in the $t+1$'st block. Finally, for $2T \leq t \leq 3T-1$, the $t+1$'st block is undoing the computations until finally wrapping back around and recovering the initial state $|0 \rangle |0 \rangle^{\otimes n}$.

Harrow, Hassidim and Lloyd now consider the following matrix $A = I - U e^{-1/T}$ and observe that its condition number is $O(T)$ and its inverse is given by
\begin{align}
\label{eq:Ainv}
    A^{-1} = \sum_{k = 0}^\infty U^k e^{-k/T}.
\end{align}
Hence if we consider the solution to the linear system $A x = |0 \rangle |0 \rangle^{\otimes n}$, then it is given by
\begin{align*}
    x = A^{-1}|0 \rangle |0 \rangle^{\otimes n} = \sum_{k = 0}^\infty U^k e^{-k/T}|0 \rangle  |0 \rangle^{\otimes n} = \sum_{t=0}^{3T-1}  U^t |0 \rangle  |0 \rangle^{\otimes n} \sum_{k=0}^\infty e^{-(t + 3Tk)/T}.
\end{align*}
The crucial observation is now that the terms $U^t |0 \rangle |0 \rangle^{\otimes}$ for $t=T,\dots,2T-1$ has the final state of the computation in its $t+1$'st block and $0$ elsewhere. Furthermore, since the $U^t$'s are unitaries and the $U^t|0\rangle|0\rangle^{\otimes}$ are disjoint in their non-zeros, it follows that if we sample and index $i$ from $x$ with probability $x_i^2/\|x\|^2$, then the sample is from one of the blocks $T+1,\dots,2T$ with probability
\begin{align}
\label{eq:outputProp}
    \frac{\sum_{t=T}^{2T-1} \left(\sum_{k=0}^\infty e^{-(t+3Tk)/T}\right)^2}{\sum_{t=1}^{3T-1} \left(\sum_{k=0}^\infty e^{-(t+3Tk)/T}\right)^2} = \Omega(1).
\end{align}
When this happens, the remaining $n$ bits in the index $i$ is a sample from the output state of the circuit. To finally obtain a symmetric matrix $M$, they define it as
\begin{align}
\label{eq:M}
    M =  \begin{pmatrix}
0 & A \\
A^\dagger & 0 
\end{pmatrix}
\end{align}
and remark that its inverse is 
\begin{align}
\label{eq:Minv}
    M^{-1} =  \begin{pmatrix}
0 & (A^{-1})^\dagger \\
A^{-1} & 0 
\end{pmatrix}.
\end{align}

\paragraph{$k$-Forrelation.}
We now want to use the above reduction in combination with an oracle query problem to prove our lower bound for QIC algorithms. For this, consider the $k$-Forrelation problem. In this problem, the input is $k$ vectors $z^{(1)},\dots,z^{(k)} \in \{-1,1\}^{2^n}$ and the goal is to estimate
\[
\phi_{n,k}(z^{(1)},\dots,z^{(k)}) := \frac{1}{2^n} \one^T D_{z^{(1)}} H D_{z^{(2)}} \cdots H D_{z^{(k)}} \one.
\]
Here $D_{z^{(i)}}$ is a diagonal matrix with $z^{(i)}$ on the diagonal, $\one$ is the all-1 vector and $H$ is the $2^n \times 2^n$ Hadamard matrix. The goal is to distinguish the two cases $|\phi_{n,k}(z^{(1)},\dots,z^{(k)})| \leq \alpha$ and $|\phi_{n,k}(z^{(1)},\dots,z^{(k)})| \geq \beta$ for $0 < \alpha < \beta < 1$ chosen appropriately.

The definition of $\phi_{n,k}$ immediately gives a quantum circuit $\ForC$ for the problem using $k$ queries and $n$ qubits. On initial state $|0\rangle^{\otimes n}$, we can first use $n$ one-qubit Hadamard gates to transform the state to $\one \cdot 2^{-n/2}$. We next use the convention that an oracle query to an $z^{(i)}$ in the basis state $|a\rangle$, with $a$ an index in $[2^n]$, returns the state $z^{(i)}_a |a \rangle $. Each of the $D_{z^{(i)}}$'s are thus replaced by such oracle calls and each $H$ is implemented via $n$ one-qubit Hadamard gates. Finally, on the state $D_{z^{(1)}} H D_{z^{(2)}} \cdots H D_{z^{(k)}} \one \cdot 2^{-n/2}$, if we now apply another $2^n \times 2^n$ Hadamard matrix, then the value of the coordinate corresponding to $|0\rangle^{\otimes n}$ is precisely $\phi_{n,k}(z^{(1)},\dots,z^{(k)})$. Measuring the $n$ qubits thus returns the all-0 state with probability at least $\beta^2$ if $|\phi_{n,k}(z^{(1)},\dots,z^{(k)})| \geq \beta$ and at most $\alpha^2$ if $|\phi_{n,k}(z^{(1)},\dots,z^{(k)})| \leq \alpha$.

For the particular choice of $\alpha= 2^{-5k-1}$ and $\beta=2^{-5k}$, it thus suffices to run the quantum circuit $2^{O(k)}$ times to be able to distinguish the two cases except with a failure probability of $1/3$. However, for the same choice of $\alpha$ and $\beta$~\cite{forrelationLB1,forrelationLB2}, it was shown that any classical randomized algorithm with failure probability $1/3$ must make $\Omega((2^n/n)^{1-1/k})$ queries when $k$ is chosen as a constant.

\paragraph{Combining HHL and $k$-Forrelation.}
We now let $U_1,\dots,U_T$ in the HHL reduction be the unitaries in the quantum circuit $\ForC$ for $k$-Forrelation, with each oracle query to a $z^{(i)}$ replaced by the diagonal matrix $D_{z^{(i)}}$. Thus each $U_j$ is either a one-qubit Hadamard gate or a diagonal matrix $D_{z^{(i)}}$. We thus have $T=(k+1)n + k$ (that is, $k+1$ Hadamard matrices using $n$ gates each and $k$ diagonal matrices) and may define the matrices $A$ and $M$ from~\eqref{eq:M}. For constant $k$, the matrix $M$ is an $N \times N$ real symmetric matrix with $N = O(n 2^n)$ and condition number $O(n) = O(\ln N)$.

Similarly to the reduction we gave in Section~\ref{sec:reduct}, assume we have a QIC algorithm $\cA$ that makes $T(M,\eps)$ queries to $SQ(M)$ to sample an index from $\tilde{x}$ such that each index $i$ is sampled with probability $\tilde{x}_i^2/\|\tilde{x}\|^2$ for a $\tilde{x}$ satisfying $\|\tilde{x}-x\| \leq \eps \|x\|$ for a sufficiently small $\eps > 0$. Here $x = M^{-1}e_1$ is the solution to the linear system $Mx = e_1$. By the structure of $M^{-1}$ from~\eqref{eq:Minv} and $A^{-1}$ from~\eqref{eq:Ainv}, we see that
\begin{align*}
    x=M^{-1} e_1 = \begin{pmatrix}
0 \\
A^{-1}|0\rangle |0\rangle^{\otimes n} 
\end{pmatrix}.
\end{align*}
Here the first $|0 \rangle$ in $A^{-1}|0\rangle |0 \rangle^{\otimes n}$ is the $3T$-dimensional vector with a $1$ in its first entry and $0$ elsewhere.

We will now simulate the QIC algorithm by making classical queries to the inputs $z^{(1)},\dots,z^{(k)}$. For this, we first observe that every row and every column of $U$ (see~\eqref{eq:defU}) has unit length and all its diagonal entries are $0$. This further implies that the rows and columns of $A= I-Ue^{-1/T}$ and $M$ all have the same norm. Thus if the QIC algorithm $\cA$ samples a row or column, we can simply return a uniform random index of a row or column. Next, if $\cA$ samples an index in a row or column proportional to the square of the index, or queries an index, then we have two cases. If the row or column does not correspond to a diagonal $D_{z^{(i)}}$, then we can directly determine the entries of the row or column and simulate the sample or query without any oracle queries to $z^{(1)},\dots,z^{(k)}$. Finally, if the row or column corresponds to a row or column of $D_{z^{(i)}}$, we query the corresponding $z^{(i)}$ and thereby determine the diagonal entry of $D_{z^{(i)}}$. We can again correctly simulate a sample or query to the row/column. Thus the number of classical queries to $z^{(1)},\dots,z^{(k)}$ is at most $T(M,\eps)$ to draw one sample from $\tilde{x}$.

Recall now that we can think of the vector $y=A^{-1}|0 \rangle |0 \rangle^{\otimes n}$ as consisting of $3T$ blocks $y^{(t)}$ with $y = \sum_{t=0}^{3T-1} |t\rangle \otimes y^{(t)}$ and each $y^{(t)}$ a $2^{n}$-dimensional vector (corresponding to the $n$ qubits). Furthermore, $y^{(T)},\dots,y^{(2T-1)}$ all equal the final state of the quantum $k$-Forrelation circuit $\ForC$. If we now use $\cA$ to draw an index $i = |t\rangle |j \rangle$ from $\tilde{x}$ (with $t \in [3T], j \in [2^n]$), we first check if $t \in \{T,\dots,2T-1\}$. Assume for now that $\tilde{x}=x$. Then we know from~\eqref{eq:outputProp} that $t \in \{T,\dots,2T-1\}$ with $\Omega(1)$ probability. If this is the case, we know that $j$ is the all-0 bit string with probability $\phi_{n,k}(z^{(1)},\dots,z^{(k)})^2$.

We now repeat the above sampling for $2^{O(k)}$ times with independent randomness and have obtained a classical algorithm distinguishing $|\phi_{n,k}(z^{(1)},\dots,z^{(k)})| \leq 2^{-5k-1}$ from $|\phi_{n,k}(z^{(1)},\dots,z^{(k)})| \geq 2^{-5k}$ except with failure probability $1/3$ (via Hoeffding's inequality). The classical algorithm makes $2^{O(k)}T(M,\eps)$ queries. Thus the lower bounds from previous works imply that $T(M,\eps) =\Omega((2^n/n)^{1-1/k})$ for any constant choice of $k$.

\paragraph{Handling Approximation.}
In the above, we assumed samples were from the exact solution $x$. We now show how to handle samples from a $\tilde{x}$ with $\|x-\tilde{x}\| \leq \eps \|x\|$. 

Recall from above that our obtained classical $k$-Forrelation algorithm performs $Q=2^{O(k)}$ independent samples from the distribution corresponding to $x_i^2/\|x\|^2$. With a constant factor increase in samples, we may further assume the failure probability is at most $1/4$ instead of $1/3$. Now if the distribution given by $\tilde{x}_i^2/\|\tilde{x}\|^2$ has total variation distance $D$ from the distribution $x_i^2/\|x\|^2$, then the total variation distance on the full $Q$ samples is at most $QD$. If this is less than $1/12$, we get that the failure probability is at most $1/4+1/12 \leq 1/3$ and the lower bound $T(M,\eps) = \Omega((2^n/n)^{1-1/k})$ also applies to sampling from $\tilde{x}_i^2/\|\tilde{x}\|^2$ (assuming constant $k$).

Let us use $x^2$ to denote the vector with entries $x_i^2$ and $\tilde{x}^2$ the vector with entries $\tilde{x}_i^2$. Recalling that total variation distance is half the $\ell_1$ distance, we get
\begin{align*}
    D &= \frac{1}{2} \cdot \left\| \frac{x^2}{\|x\|^2} - \frac{\tilde{x}^2}{\|\tilde{x}\|^2} \right\|_1 \\
    &= \frac{1}{2} \cdot \left\| \frac{x^2}{\|x\|^2} - \frac{\tilde{x}^2}{\|x\|^2} \cdot \frac{\|\tilde{x}\|^2}{\|x\|^2} \right\|_1 \\
    &= \frac{1}{2} \cdot \left\| \frac{x^2}{\|x\|^2} - \frac{\tilde{x}^2}{\|x\|^2} \cdot \left(1+\frac{\|\tilde{x}\|^2-\|x\|^2}{\|x\|^2}\right) \right\|_1 \\
    &\leq \frac{1}{2} \cdot \left( \left\| \frac{x^2-\tilde{x}^2}{\|x\|^2} \right\|_1 +  \frac{|\|\tilde{x}\|^2-\|x\|^2|}{\|x\|^2}\right).
\end{align*}
For $0 < \eps \leq 1$ we have that $\|\tilde{x}\|^2 = \|x + (\tilde{x}-x)\|^2 \leq (\|x\| + \|\tilde{x}-x\|)^2 \leq (\|x\|+\eps \|x\|)^2 = (1+\eps)^2 \|x\|^2 \leq (1+4\eps)\|x\|^2$. Similarly, $\|\tilde{x}\|^2 = \|x + (\tilde{x}-x)\|^2 \geq (\|x\| - \|\tilde{x}-x\|)^2 \geq (1-\eps)^2\|x\|^2 \geq (1-2\eps)\|x\|^2$. Hence
\[
D \leq \frac{1}{2} \cdot \left(\frac{ \left\| x^2-\tilde{x}^2\right\|_1 }{\|x\|^2} +  4 \eps\right).
\]
We also have $|x_i^2 - \tilde{x}_i^2| = |x_i^2 - (x_i + (\tilde{x}_i-x_i))^2| = |2x_i(\tilde{x}_i - x_i)-(\tilde{x}_i-x_i)^2| \leq 2|x_i(\tilde{x}_i-x_i)| + (\tilde{x}_i - x_i)^2$. By Cauchy-Schwartz, we see that
\begin{align*}
    \sum_i |x_i (\tilde{x}_i - x_i)| \leq \|x\| \cdot \|\tilde{x}-x\| \leq \eps \|x\|^2.
\end{align*}
We also have $\sum_i (\tilde{x}_i- x_i)^2 = \|\tilde{x}-x\|^2 \leq \eps^2 \|x\|^2 \leq \eps \|x\|^2$. In conclusion
\begin{align*}
    D &\leq \frac{1}{2} \cdot \left( \frac{3 \eps \|x\|^2}{\|x\|^2} + 4 \eps \right) \\
    &\leq 6 \eps.
\end{align*}
It follows that for $\eps = 2^{-\Omega(k)}$, we have the lower bound $T(M,\eps) = \Omega((2^n/n)^{1-1/k})$. Since $N = n2^n$, this is $T(M,\eps) = \Omega((N/\ln^2 N)^{1-1/k})$. Rescaling $k$ by a factor $2$, this simplifies to $T(M,\eps) = \Omega(N^{1-1/k})$ when $\eps = 2^{-\Omega(k)}$.

This gives us our main result in Theorem~\ref{thm:main}.

\section{Spectral Properties of Random Walk Matrix}
\label{sec:randomwalk}
In this section we Lemma~\ref{lem:eigC} and Lemma~\ref{lem:largexi} needed for our reduction from the random walk problem by Childs et al.

\subsection{Bounding the Eigenvalues of the Adjacency Matrix}
We start by proving Lemma~\ref{lem:eigC}, i.e.\ that the adjacency matrix $A$ of the graph $G_n$ has no eigenvalues in the range $(\sqrt{8}, 3-2^{-n}]$.

Let us first introduce some terminology. We say that the root of $T_1$ is at level $1$, the leaves of $T_1$ at level $n+1$, the leaves of $T_2$ at level $n+2$ and the root of $T_2$ is at level $2n+2$. Note that this is unlike the definition of $\psi_j^{(i)}$ in the previous section, where we focused on a single tree $T_i$ and counted the level from the root of $T_i$ and down. From hereon, it is more convenient for us to think of the level as the distance from the root of $T_1$ (with the root at level $1$). The two roots have degree two and all other nodes have degree three. 

To bound the eigenvalues of $A$, we begin with an observation that allows us to restrict our attention to so-called \emph{layered} eigenvectors (inspired by previous works bounding eigenvalues of trees~\cite{treesEigens,alonHighGirth}). We say that an eigenvector $w$ is a layered eigenvector, if all entries of $w$ corresponding to nodes at level $i$ have the same value $\phi_i$. We argue that
\begin{observation}
\label{obs:layered}
If $A$ has an eigenvector of eigenvalue $\lambda$, then it has a real-valued layered eigenvector of eigenvalue $\lambda$.
\end{observation}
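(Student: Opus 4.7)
The plan is an averaging/symmetrization argument. Given an eigenvector $w$ of $A$ with eigenvalue $\lambda$, define $\phi_i := \frac{1}{|L_i|}\sum_{u \in L_i} w_u$ for each level $i$, where $L_i$ is the set of nodes at level $i$, and let $\tilde w \in \R^N$ be the layered vector with $\tilde w_v = \phi_i$ for every $v \in L_i$. Because $A$ is real symmetric we may take $w$ (and hence $\tilde w$) to be real. Two things remain: verify $A \tilde w = \lambda \tilde w$, and argue $\tilde w \neq 0$.

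For the eigenvalue identity, the key point is that $G_n$ is \emph{level-regular}: for every pair $(i,j)$ of levels, any vertex at level $i$ has the same number $E_{i,j}$ of neighbors at level $j$. This follows directly from the construction — each non-root tree node has exactly one parent and, if non-leaf, two children at adjacent levels, each root has two children, and each leaf of $T_1$ (resp.\ $T_2$) has exactly two cycle neighbors at level $n+2$ (resp.\ $n+1$). Level-regularity immediately gives $(A\tilde w)_v = \sum_j E_{i,j} \phi_j$ for $v \in L_i$, so $A$ preserves $V$. To see the averaging produces a $\lambda$-eigenvector, sum $(Aw)_v = \lambda w_v$ over $v \in L_i$: the left side equals $\sum_j E_{j,i}\sum_{u \in L_j} w_u = \sum_j E_{j,i}|L_j|\phi_j$, which by the handshake identity $E_{i,j}|L_i| = E_{j,i}|L_j|$ (counting edges between $L_i$ and $L_j$ two ways) rearranges to $|L_i|\sum_j E_{i,j}\phi_j$, while the right side is $\lambda|L_i|\phi_i$. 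Dividing by $|L_i|$ yields $\sum_j E_{i,j}\phi_j = \lambda\phi_i$ for every $i$, which is exactly $A\tilde w = \lambda\tilde w$.

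The main obstacle I anticipate is handling the case $\tilde w = 0$, i.e.\ when $w$ sums to zero on every level, so that the naive averaging produces the zero vector. If this happens for \emph{every} eigenvector in the $\lambda$-eigenspace, that eigenspace lies entirely in $V^\perp$ and the averaging argument by itself yields nothing. To rule this out one can refine the symmetrization by averaging $w$ over the group of level-preserving automorphisms of $G_n$ (e.g.\ subtree swaps inside $T_1$ and $T_2$, which preserve the adjacency when applied simultaneously on both endpoints of each edge), or alternatively decompose $\R^N$ into $A$-invariant ``isotypic'' components of this group and show that the trivial component coincides with $V$. In the specific context of Lemma~\ref{lem:eigC} one can in fact bypass the issue: it suffices to show that $A|_{V^\perp}$ — which behaves like the adjacency operator of a near-$3$-regular object after quotienting out the level means — has spectrum contained in $[-\sqrt 8,\sqrt 8]$, so any eigenvalue $\lambda \in (\sqrt 8, 3-2^{-n}]$ is forced to be realized inside $V$, and the averaging argument above suffices for those eigenvalues.
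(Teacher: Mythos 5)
Your averaging construction $\phi_i = |L_i|^{-1}\sum_{u\in L_i} w_u$, the verification of the eigenvalue equation via the edge-counting identity $E_{i,j}|L_i| = E_{j,i}|L_j|$, and the passage from a complex to a real eigenvector all reproduce the paper's own proof exactly. What your write-up adds is a correct diagnosis of a gap that the paper itself overlooks: the averaged vector $\tilde w$ might vanish, in which case $(\lambda I - A)\tilde w = 0$ does \emph{not} make $\tilde w$ an eigenvector. The paper's proof asserts that $(\lambda I - A)w = 0$ ``implies that $w$ is also an eigenvector of eigenvalue $\lambda$'' without ever checking $w \neq 0$. The observation as literally stated is in fact false for general level-structured graphs: for the three-vertex path with the middle vertex as root, the eigenvalue $0$ has the (up to scaling) unique eigenvector $(1,0,-1)$, whose level sums vanish, and there is no layered $0$-eigenvector. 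So the proof in the paper is not complete, and you were right to anticipate this obstacle rather than treat it as obviously ignorable. For Lemma~\ref{lem:eigC} what is really needed is only the weaker assertion that every eigenvalue of $A$ in $(\sqrt 8, 3-2^{-n}]$ admits a layered eigenvector, and your proposed bypass --- showing that $A$ restricted to the orthogonal complement of the layered subspace $V$ has spectral radius at most $\sqrt 8$ --- is exactly the right way to establish this; but you do not carry it out and neither does the paper, so both proofs leave the same step unfilled. Your first suggested fix (averaging over the level-preserving automorphism group of $G_n$) does not help here: since that group acts transitively within each level, averaging over it produces precisely the level-mean vector $\tilde w$, so it faces the identical vanishing issue.
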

\begin{proof}
Let $u$ be an arbitrary eigenvector with eigenvalue $\lambda$. Since $A$ is real and symmetric, its eigenvalues are real. If $u$ is complex, we may write it as $u = v + ia$ with $v$ and $a$ real vectors. Since $Av + iAa = Au = \lambda u = \lambda v + i \lambda a$, it must be the case that $Av = \lambda v$ as these are the only real vectors in $Av + iAa$ and $\lambda v + i \lambda a$. By the same argument, we also have $Aa=\lambda a$. Thus $v$ and $a$ are real eigenvectors with eigenvalue $\lambda$. Assume wlog.\ that $v$ is non-zero.

We now show how to construct a real layered eigenvector $w$ from $v$ having the same eigenvalue $\lambda$. First recall that a vector $v$ is an eigenvector of $A$ with eigenvalue $\lambda$, if and only if $(\lambda I-A)v=0$. Now given the eigenvector $v$, let $S_i$ be the set of coordinates corresponding to nodes in level $i$ and let $\phi_i = w_j = |S_i|^{-1}\sum_{k \in S_i} v_k$ for all $j \in S_i$. We will verify that $(\lambda I - A)w=0$, which implies that $w$ is also an eigenvector of eigenvalue $\lambda$.

For this, consider a level $i$ and define $p_i \in \{0,1,2\}$ as the number of neighbors a node in level $i$ has in level $i-1$. Similarly, define $c_i \in \{0,1,2\}$ as the number of neighbors a node in level $i$ has in level $i+1$. To verify that $(\lambda I -A)w=0$, we verify that for all levels $i$, we have $\lambda \phi_i -p_{i}\phi_{i-1} - c_i \phi_{i+1} = 0$. To see this, observe that since $(\lambda I - A)v=0$, it holds for every index $j \in S_i$ that $\lambda v_j - \sum_{k \in S_{i-1} : j\in \mathcal{N}(k)} v_k - \sum_{k \in S_{i+1} : j\in \mathcal{N}(k)}v_k=0$, where $\mathcal{N}(k)$ denotes the neighbors of $k$. Summing this across all nodes $j \in S_i$, we see that $\lambda \sum_{k \in S_i} v_k - c_{i-1}\sum_{k \in S_{i-1}} v_k - p_{i+1}\sum_{k \in S_{i+1}} v_k = 0$. This follows since every node $k$ in level $i-1$ has $v_k$ included precisely $c_{i-1}$ times in the sum ($p_{i+1}$ times for nodes in level $i+1$).

The left hand side also equals $\lambda \phi_i |S_i| - c_{i-1} \phi_{i-1} |S_{i-1}| - p_{i+1} \phi_{i+1} |S_{i+1}|$ by definition of $\phi_i$. Now observe that $c_{i-1}|S_{i-1}| = p_i |S_i|$ since both sides of the equality equals the number of edges between levels $i-1$ and $i$. Using this, we rewrite $0 = \lambda \phi_i |S_i| - c_{i-1} \phi_{i-1} |S_{i-1}| - p_{i+1} \phi_{i+1} |S_{i+1}| = \lambda \phi_i |S_i| - p_i \phi_{i-1}|S_i|  - c_i \phi_{i+1}|S_i|  =|S_i|(\lambda \phi_i -p_i \phi_{i-1} - c_i \phi_{i+1})|S_i|$. Since $|S_i| \neq 0$, we conclude $\lambda \phi_i - p_i \phi_{i-1} - c_i \phi_{i+1}=0$.
\end{proof}

Using Observation~\ref{obs:layered} it thus suffices for us to argue that $A$ has no real-valued layered eigenvectors with eigenvalues in the range $(\sqrt{8}, 3-2^{-n}]$. The first step in this argument, is to lower bound the largest eigenvalue $\lambda_1$. For this, recall that $\lambda_1 = \max_{v : \|v\|=1} v^TAv$. Noting that $A$ is $N \times N$ with $N=2^{n+2}-2$, we now consider the unit vector $v$ with all entries $1/\sqrt{N}$. Then $(Av)_i=3/\sqrt{N}$ for $i$ not one of the two roots, and $(Av)_i=2/\sqrt{N}$ for $i$ one of the two roots. Hence 
\begin{align}
\lambda_1 \geq 3 - 2/N > 3-2^{-n}.\label{eq:lambda1}
\end{align}

Next, we aim to understand the structure of all real-valued layered eigenvectors of eigenvalues larger than $\sqrt{8}$. So let $w$ be an arbitrary such eigenvector of eigenvalue $\lambda > \sqrt{8}$ and let $\phi_i$ denote the value of the coordinates of $w$ corresponding to level $i$. Our goal is to show that $\lambda > \sqrt{8}$ implies $\lambda > 3-2^{-n}$. 

Since $(\lambda I - A)w=0$, we conclude from the row of $A$ corresponding to the root of $T_1$ that $\lambda \phi_1 - 2 \phi_2 = 0 \Rightarrow \phi_2 = (\lambda/2)\phi_1$. For levels $1 \leq i \leq n$, we must have $\lambda\phi_{i+1} - \phi_{i} - 2 \phi_{i+2} = 0$. As discussed earlier, a recurrence $a \phi_{i+2} + b \phi_{i+1} + c\phi_{i} = 0$ is known as a second degree linear recurrence and when $b^2 > 4ac$ its (real) solutions are of the form
\[
\phi_i = \alpha \cdot \left(\frac{-b + \sqrt{b^2-4ac}}{2a}\right)^i + \beta \cdot \left(\frac{-b - \sqrt{b^2 - 4ac}}{2a}\right)^i,
\]
with $\alpha, \beta \in \R$ such that the initial condition $\phi_2 = (\lambda/2)\phi_1$ is satisfied. Note that in our case, $a=-2, b=\lambda$ and $c=-1$. Since we assume $\lambda > \sqrt{8}$, we indeed have $b^2 > 4ac$. It follows that 
\[
\phi_i = \alpha \cdot \left(\frac{\lambda - \sqrt{\lambda^2-8}}{4}\right)^i + \beta \cdot \left(\frac{\lambda + \sqrt{\lambda^2-8}}{4}\right)^i,
\]
for $\alpha, \beta$ satisfying $\phi_2 = (\lambda/2)\phi_1$. For short, let us define $\Delta=\sqrt{\lambda^2-8}$. Note that $\Delta$ is real since we assume $\lambda > \sqrt{8}$. Furthermore, we have $\lambda-\Delta>0$ and thus in later equations, it is safe to divide by $\lambda-\Delta$.

The initial condition now implies that we must have
\begin{eqnarray*}
    \alpha \cdot \left(\frac{\lambda - \Delta}{4}\right)^2 + \beta \cdot \left(\frac{\lambda + \Delta}{4}\right)^2 &=&  (\lambda/2) \left(\alpha \cdot \frac{\lambda - \Delta}{4} + \beta \cdot \frac{\lambda + \Delta}{4}\right) \Rightarrow \\
    \alpha \cdot \left(\lambda - \Delta\right)^2 + \beta \cdot \left(\lambda + \Delta\right)^2 &=&  2\lambda \left(\alpha \cdot (\lambda - \Delta) + \beta \cdot (\lambda + \Delta)\right) \Rightarrow \\
    \beta \cdot \left(\lambda^2+\Delta^2 + 2\lambda \Delta -2 \lambda(\lambda+\Delta)\right) &=& \alpha \cdot \left(-\lambda^2 - \Delta^2 + 2 \lambda \Delta +2 \lambda(\lambda-\Delta)\right) \Rightarrow\\
    \beta \cdot \left(-\lambda^2+\Delta^2 \right) &=& \alpha \cdot \left(\lambda^2 - \Delta^2 \right)
    \Rightarrow\\
    \beta &=&  -\alpha.
\end{eqnarray*}
Thus the solutions have the following form for any $\alpha \in \R$:
\begin{align}
\phi_i &= \alpha \cdot \left(\left(\frac{\lambda -\Delta}{4}\right)^i - \left(\frac{\lambda + \Delta}{4}\right)^i\right)\notag \\
&= \alpha \left(\frac{\lambda -\Delta}{4}\right)^i \cdot \left(1 - \left(\frac{\lambda +\Delta}{\lambda -\Delta}\right)^i \right)\label{eq:T1eig}
\end{align}
for $1 \leq i \leq n+2$.

Now observe that the same equations apply starting from the root of $T_2$ and down towards the leaves of $T_2$ (recall our convention that the root of $T_2$ is at level $2n+2$), thus for $1 \leq i \leq n+2$
\begin{align}
\phi_{2n+3-i}
&= \alpha' \left(\frac{\lambda -\Delta}{4}\right)^i \cdot \left(1 -  \left(\frac{\lambda +\Delta}{\lambda -\Delta}\right)^i \right).\label{eq:T2eig}
\end{align}

Examining~\eqref{eq:T1eig} and~\eqref{eq:T2eig}, we first notice that
\begin{align}
\left(\frac{\lambda-\Delta}{4}\right)^i \cdot \left(1 -  \left(\frac{\lambda +\Delta}{\lambda -\Delta}\right)^i \right) < 0. \label{eq:sign}
\end{align}
Furthermore, both~\eqref{eq:T1eig} and~\eqref{eq:T2eig} give bounds on the leaf levels $\phi_{n+1}$ and $\phi_{n+2}$ and these must be equal. Any non-zero solution thus must have $\alpha \cdot \alpha' > 0$, i.e.\ the two have the same sign. From~\eqref{eq:T1eig},~\eqref{eq:T2eig} and~\eqref{eq:sign} it now follows that all pairs $\phi_i$ and $\phi_j$ satisfy $\phi_i \cdot \phi_j > 0$.

Now let $v_1$ be a layered and real-valued eigenvector corresponding to the largest eigenvalue $\lambda_1$ (such a layered eigenvector is guaranteed to exist by Observation~\ref{obs:layered}). Let $\psi_i$ denote the values of the entries in $v_1$ corresponding to nodes of level $i$. Since $\lambda_1 > \sqrt{8}$, the values $\psi_i$ satisfy $\psi_i \cdot \psi_j >0$ for all pairs $\psi_i$ and $\psi_j$ by the above arguments. Since eigenvectors corresponding to distinct eigenvalues of real symmetric matrices are orthogonal, we must have that either $\lambda = \lambda_1$ or $v_1^T w = 0$. But
\begin{align*}
    (v_1^T w)^2 &= \left(\sum_{i=1}^{2n+2} 2^{\min\{i-1,2n+2-i\}} \phi_i \psi_i\right)^2 \\
    &= \sum_{i=1}^{2n+2} \sum_{j=1}^{2n+2} 2^{\min\{i-1,2n+2-i\}}2^{\min\{j-1,2n+2-j\}} \phi_i \phi_j \psi_i \psi_j \\
    &>0.
\end{align*}
We thus conclude $\lambda = \lambda_1 > 3-2^{-n}$.

\subsection{Understanding the Solution}
In this section, we prove Lemma~\ref{lem:largexi}. That is, we show that for $\lambda = \sqrt{8} + \gamma$ with $0 < \gamma \leq \left(\frac{1}{16(n+2)}\right)^2$, it holds that $x_{\roottwo}^2 = \Omega(n^{-5}\|x\|^2)$ where $x = M^{-1}e_1$ is the solution to the linear system $Mx=e_1$ and $\roottwo$ is the index of the root of $T_2$.

As remarked earlier, we have that the matrix $M$ is block diagonal, where one block corresponds to the matrix $B = \lambda I - A$, where $A$ is the adjacency matrix of $G_n$. Furthermore, $e_1$ is non-zero only in the block corresponding to $B$. It follows that it suffices to understand the solution $x = B^{-1}e_1$ as padding it with zeroes gives $M^{-1}e_1$.

Similarly to the previous section, we will focus on \emph{layered} vectors $x$. More formally, we claim there is a layered vector $x$ such that $x=B^{-1}e_1$. That is, all entries of $x$ corresponding to nodes in level $i$ have the same value $\phi_i$. Since $B$ is full rank, the solution $x$ to $Bx=e_1$ is unique and thus must equal this layered vector. We thus set out to determine appropriate values $\phi_i$.

Similarly to the previous section, we have that choosing $\phi_i$'s of the form
\[
\phi_i = \alpha \left(\frac{\lambda - \Delta}{4}\right)^i + \beta \left(\frac{\lambda + \Delta}{4}\right)^i, 
\]
for any $\alpha, \beta \in \R$ and $i=1,\dots,n+2$ satisfy the recurrence $-2\phi_{i+2}+\lambda \phi_{i+1} - \phi_i = 0$. However, the initial condition we get from $Bx=e_1$ (the first row/equality of the linear system) is different than the previous section. Concretely, the initial condition becomes $\lambda \phi_1 - 2 \phi_2 = 1\Rightarrow \phi_2 = (\lambda/2)\phi_1 - 1/2$. This forces $\beta$ to be chosen as
\begin{eqnarray*}
    \alpha \cdot \left(\frac{\lambda - \Delta}{4}\right)^2 + \beta \cdot \left(\frac{\lambda + \Delta}{4}\right)^2 &=&  (\lambda/2) \left(\alpha \cdot \frac{\lambda - \Delta}{4} + \beta \cdot \frac{\lambda + \Delta}{4}\right)-1/2 \Rightarrow \\
    \alpha \cdot \left(\lambda - \Delta\right)^2 + \beta \cdot \left(\lambda + \Delta\right)^2 &=&
    (2 \lambda) \left(\alpha (\lambda- \Delta) + \beta(\lambda+\Delta)\right) -8 \Rightarrow\\
    \beta \cdot \left(\lambda^2 + \Delta^2 + 2 \lambda \Delta -2 \lambda(\lambda+\Delta)\right)  &=&
    \alpha\cdot \left(-\lambda^2 - \Delta^2 + 2\lambda \Delta +2  \lambda(\lambda - \Delta) \right) -8 \Rightarrow\\
    \beta \cdot \left(-\lambda^2 + \Delta^2 \right)  &=&
    \alpha\cdot \left(\lambda^2 - \Delta^2  \right) -8 \Rightarrow\\
    -8\beta  &=&
    8 \alpha -8 \Rightarrow\\
    \beta &=& -\alpha + 1.
\end{eqnarray*}
Thus $\phi_i$'s of the following form, for $1 \leq i \leq n+2$ and $\alpha \in R$, satisfy the constraints corresponding to $T_1$
\begin{align}
\phi_i &= \alpha \cdot \left(\left(\frac{\lambda -\Delta}{4}\right)^i -\left(\frac{\lambda + \Delta}{4}\right)^i\right) + \left(\frac{\lambda + \Delta}{4}\right)^i \notag\\
&= \alpha \cdot \left(\frac{\lambda -\Delta}{4}\right)^i\left(1-\left(\frac{\lambda +\Delta}{\lambda-\Delta}\right)^i \right) + \left(\frac{\lambda + \Delta}{4}\right)^i.\label{eq:T1sol}
\end{align}
The tree $T_2$ puts similar constraints on $\phi_{2n+3-i}$ for $i=1,\dots,n+2$, except the root of $T_2$ gives the initial condition $\lambda \phi_{2n+2} - 2 \phi_{2n+1}=0$ (recall again that we say the root of $T_2$ is at level $2n+2$). This is precisely the same recurrence as in the previous section. Thus for $1 \leq i \leq n+2$ we have from~\eqref{eq:T2eig} that for any $\alpha' \in \R$, the following satisfy the constraints of $T_2$
\begin{align}
\phi_{2n+3-i}
= \alpha' \cdot \left(\frac{\lambda -\Delta}{4}\right)^i \left( 1-\left(\frac{\lambda+\Delta}{\lambda-\Delta}\right)^i \right). \label{eq:T2sol}
\end{align}
Noting that both~\eqref{eq:T1sol} and~\eqref{eq:T2sol} gives a formula for $\phi_{n+1}$ and $\phi_{n+2}$, we get two linear equations in two unknowns, i.e.\ 
\[
\alpha \cdot \left(\frac{\lambda -\Delta}{4}\right)^{n+j}\left(1-\left(\frac{\lambda +\Delta}{\lambda-\Delta}\right)^{n+j} \right) + \left(\frac{\lambda + \Delta}{4}\right)^{n+j} = \alpha' \cdot \left(\frac{\lambda -\Delta}{4}\right)^{n+3-j} \left( 1-\left(\frac{\lambda+\Delta}{\lambda-\Delta}\right)^{n+3-j} \right).
\]
for $j=1,2$. The equations have a unique solution $\alpha, \alpha'$ provided that the two vectors $v_j$
\[
v_j = \left(\left(\frac{\lambda -\Delta}{4}\right)^{n+j}\left(1-\left(\frac{\lambda +\Delta}{\lambda-\Delta}\right)^{n+j}  \right),\left(\frac{\lambda -\Delta}{4}\right)^{n+3-j} \left( 1-\left(\frac{\lambda+\Delta}{\lambda-\Delta}\right)^{n+3-j} \right) \right)
\]
with $j=1,2$ are linearly independent. To see that they are linearly independent, we compare the ratio $r_1=v_1(1)/v_2(1)$ of the first coordinate $v_1$ and $v_2$ to the ratio $r_2=v_1(2)/v_2(2)$ of the second coordinate. We have that $v_1$ and $v_2$ are linearly independent if these ratios are distinct. The ratios of the coordinates are
\[
r_1=\left(\frac{\lambda-\Delta}{4} \right)^{-1} \frac{\left(1-\left(\frac{\lambda+\Delta}{\lambda-\Delta}\right)^{n+1} \right)}{\left(1-\left(\frac{\lambda+\Delta}{\lambda-\Delta}\right)^{n+2} \right)}, 
r_2=\left(\frac{\lambda-\Delta}{4} \right) \frac{\left(1-\left(\frac{\lambda+\Delta}{\lambda-\Delta}\right)^{n+2} \right)}{\left(1-\left(\frac{\lambda+\Delta}{\lambda-\Delta}\right)^{n+1} \right)}.
\]
We thus have $r_1 = r_2^{-1}$, implying linear independence whenever $|r_2| \neq 1$. To argue that this is the case, we prove the following auxiliary results using simple approximations of $(1+x)^a$
\begin{claim}
\label{cl:approx}
For $i \leq n+2 \leq \gamma^{-1/2}/16$ and $0 < \gamma \leq 1/64$, we have $\Delta \leq 4 \sqrt{\gamma}$, 
\begin{enumerate}
\item \[
0 < \frac{2\Delta}{\lambda+\Delta} < \frac{2\Delta}{\lambda-\Delta} < \frac{1}{2(n+2)},
\]
\item 
\[
    1 + \frac{i \Delta}{\lambda-\Delta} \leq  \left(\frac{\lambda+\Delta}{\lambda-\Delta}\right)^i 
    \leq 1 + \frac{4 i \Delta}{\lambda-\Delta}.
\]
\item 
\[
1 - \frac{4 i \Delta}{\lambda+\Delta} \leq  \left(\frac{\lambda-\Delta}{\lambda+\Delta}\right)^i
    \leq 1 - \frac{i \Delta}{\lambda+\Delta}. 
\]
\item \[
1 + \frac{1}{2(n+1)} \leq \frac{1 - \left(\frac{\lambda+\Delta}{\lambda-\Delta} \right)^{n+2}}{1 - \left(\frac{\lambda+\Delta}{\lambda-\Delta} \right)^{n+1}} \leq 1 + \frac{4}{n+1}. 
\]
\end{enumerate}
\end{claim}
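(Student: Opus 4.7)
The plan is to reduce all four parts to elementary Bernoulli-type bounds on $(1 \pm z)^i$, using as a starting point the identity $\Delta^2 = \lambda^2 - 8 = 2\sqrt{8}\gamma + \gamma^2$. For $\gamma \leq 1/64$ this gives $\Delta^2 \leq 16\gamma$, so $\Delta \leq 4\sqrt{\gamma}$; combined with $\lambda \geq \sqrt{8}$ and $4\sqrt{\gamma} \leq 1/2$, it also yields $\lambda > \Delta > 0$, ensuring every denominator appearing in the claim is strictly positive.

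For part 1, the middle inequality is immediate from $\lambda + \Delta > \lambda - \Delta > 0$, and for the final inequality I would substitute $\Delta \leq 4\sqrt{\gamma}$ together with the hypothesis $\sqrt{\gamma} \leq 1/(16(n+2))$ to obtain $\frac{2\Delta}{\lambda-\Delta} \leq \frac{8\sqrt{\gamma}}{\sqrt{8}-1/2} < \frac{1}{2(n+2)}$, using $\sqrt{8} - 1/2 > 1$ in the last step.

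For parts 2 and 3, set $x = 2\Delta/(\lambda-\Delta)$ and $y = 2\Delta/(\lambda+\Delta)$, so the two base ratios become $1+x$ and $1-y$ with $0 < y < x \leq 1/(2(n+2))$ by part 1, and in particular $iy < ix \leq 1/2$ for every $i \leq n+2$. Both lower bounds follow directly from Bernoulli's inequality: $(1+x)^i \geq 1 + ix \geq 1 + ix/2 = 1 + i\Delta/(\lambda-\Delta)$ for part 2, and $(1-y)^i \geq 1 - iy \geq 1 - 2iy = 1 - 4i\Delta/(\lambda+\Delta)$ for part 3. For the upper bound in part 2, I would combine $(1+x)^i \leq e^{ix}$ with the elementary estimate $e^z \leq 1 + 2z$ for $0 \leq z \leq 1$ (checked by noting $1 + 2z - e^z$ vanishes at $z=0$ and is positive at $z=1$, with a single interior maximum). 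For the upper bound in part 3, combine $(1-y)^i \leq e^{-iy}$ with $e^{-z} \leq 1 - z/2$ for $0 \leq z \leq \ln 2$ (equivalent to $z/2 \leq 1 - e^{-z}$, which holds since both sides vanish at $0$ and the right hand side has derivative $e^{-z} \geq 1/2$ throughout that range); this range covers $iy \leq 1/2 < \ln 2$.

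For part 4, let $a = (\lambda+\Delta)/(\lambda-\Delta) = 1 + x > 1$ and rewrite
\[
\frac{1-a^{n+2}}{1-a^{n+1}} \;=\; \frac{a^{n+2}-1}{a^{n+1}-1} \;=\; a + \frac{a-1}{a^{n+1}-1} \;=\; a + \frac{x}{a^{n+1}-1}.
\]
Applying part 2 with $i = n+1$ sandwiches $a^{n+1} - 1 \in [(n+1)x/2,\, 2(n+1)x]$, so the correction term $x/(a^{n+1}-1)$ lies in $[1/(2(n+1)),\, 2/(n+1)]$, and combining with $1 < a \leq 1 + 1/(2(n+2)) \leq 1 + 1/(2(n+1))$ from part 1 yields the claimed sandwich, since $1/(2(n+1)) + 2/(n+1) = 5/(2(n+1)) \leq 4/(n+1)$. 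The whole argument is routine; the only mildly delicate step is the pair of elementary inequalities on $e^{\pm z}$ used to sharpen Bernoulli into the precise constants demanded by parts 2 and 3, and I do not anticipate any genuine obstacle beyond bookkeeping of constants.
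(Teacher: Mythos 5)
Your proposal is correct and takes essentially the same route as the paper: both reduce all four parts to elementary sandwich bounds of the form $1+z \le e^z$ and $e^{z} \le 1+2z$ (resp.\ $e^{-z}\le 1-z/2$) applied to the base ratio $(\lambda\pm\Delta)/(\lambda\mp\Delta)$, together with the estimate $\Delta\le 4\sqrt{\gamma}$. The only (cosmetic) differences are that you invoke Bernoulli directly for the lower bounds where the paper goes through an intermediate exponential, and for part 4 you expand $\frac{a^{n+2}-1}{a^{n+1}-1}=a+\frac{a-1}{a^{n+1}-1}$ rather than the paper's $1+\frac{(a-1)a^{n+1}}{a^{n+1}-1}$, which lets you avoid separately bounding the numerator factor $a^{n+1}$ and yields a marginally tighter constant ($5/2$ vs.\ $4$ over $n+1$) — both decompositions are algebraically equivalent and the paper's stated bound is what is used downstream, so this does not matter.
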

From Claim~\ref{cl:approx} point 4., $\lambda-\Delta \leq \lambda = \sqrt{8} + \gamma$, $n+2 \leq \gamma^{-1/2}/16$, $\gamma \leq 1/64$ and $n$ large enough, we get 
\[
0 < r_2 \leq \frac{\sqrt{8} + 1/64}{4} \cdot \left(1 + \frac{4}{n+1}\right) < 1,
\]
implying linear independence and thus existence of a unique solution $\alpha, \alpha'$. We defer the proof of Claim~\ref{cl:approx} to Section~\ref{sec:aux} as it consists of careful, but simple, calculations.

We now derive a number of properties of this unique choice of $\alpha$ and $\alpha'$. For this, we need one last auxiliary result 
\begin{claim}
\label{cl:alphapositive}
For $n+2 \leq \gamma^{-1/2}/16$ and $0 < \gamma \leq 1/64$, we have 
\[
-1-\frac{2}{n+1} \leq \alpha \left(\left(\frac{\lambda-\Delta}{\lambda+\Delta}\right)^{n+1}-1\right) \leq -1 - \frac{1}{4(n+1)}.
\]
\end{claim}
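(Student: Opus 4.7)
The plan is to eliminate $\alpha'$ from the two matching equations, derive a closed-form expression for $\alpha z_{n+1}$ where $z_{n+1} := \rho^{n+1} - 1$ and $\rho := (\lambda-\Delta)/(\lambda+\Delta)$, and then substitute the quantitative bounds from Claim~\ref{cl:approx}. Abbreviate $u := (\lambda-\Delta)/4$, $v := (\lambda+\Delta)/4$, $w_i := 1-\rho^{-i}$, $R := w_{n+2}/w_{n+1}$, and $\mu := \rho^{-1}$. Two elementary identities drive the algebra: $uv = (\lambda^2-\Delta^2)/16 = 1/2$ and $v/u = \mu$, which together give $u^2 = (uv)(u/v) = \rho/2$.

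Multiply the first matching equation by $u^{n+1}w_{n+1}$ and the second by $u^{n+2}w_{n+2}$, and subtract to cancel the $\alpha'$ terms. Using $uv = 1/2$ to simplify the right-hand side, dividing by $u^{2n+2}$ and then by $w_{n+1}^2$, and finally multiplying through by $w_{n+1}\rho^{n+1} = z_{n+1}$, one obtains the clean formula
\[
\alpha z_{n+1} \;=\; \frac{R/2 - 1}{1 - u^2 R^2} \;=\; \frac{R/2 - 1}{1 - \rho R^2/2}.
\]
The ``ideal'' substitution $\rho = 1, R = 1$ evaluates to exactly $-1$, so the remaining job is to quantify the deviation. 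Writing $R = 1+\delta$ and $\rho = 1-\eta$,
\[
\alpha z_{n+1} = -\frac{1 - \delta}{1 - 2\delta - \delta^2 + \eta + 2\delta\eta + \delta^2\eta},
\]
and the claim becomes the assertion that this lies in $[-1-2/(n+1),\, -1-1/(4(n+1))]$.

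To control $\delta$ and $\eta$ tightly I exploit the explicit coupling $\delta = (\mu-1)/(1-\mu^{-(n+1)})$ and $\eta = (\mu-1)/\mu$, both depending on the single parameter $\mu$. Claim~\ref{cl:approx} point 1 gives $\mu - 1 = 2\Delta/(\lambda-\Delta) < 1/(2(n+2))$, and point 3 gives $1 - \mu^{-(n+1)} \in [(n+1)\Delta/(\lambda+\Delta),\, 4(n+1)\Delta/(\lambda+\Delta)]$. Combining these with a first-order Taylor expansion in $\mu-1$ yields $\delta = 1/(n+1) + O(\mu-1)$ and $\eta = \mu-1 + O((\mu-1)^2)$, so that $\delta - \eta$ lies in the narrow range $[3/(4(n+1)),\, 1/(n+1)]$ up to lower-order corrections. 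Substituting into the display above and cross-multiplying, the upper bound $\alpha z_{n+1} \leq -1 - 1/(4(n+1))$ reduces to $\delta - \eta \geq (1+O(1/n))/(4(n+1))$, and the lower bound $\alpha z_{n+1} \geq -1 - 2/(n+1)$ reduces to $\delta - \eta \leq (1+O(1/n)) \cdot 2/(n+1)$; both follow directly from the estimate on $\delta - \eta$. The main obstacle is precisely this coupled tracking: since $\delta$, $\eta$, and the target half-window are all $\Theta(1/n)$, plugging in the separate worst-case bounds from Claim~\ref{cl:approx} (whose extremes $\delta = 1/(2(n+1))$ and $\eta \approx 1/(2(n+2))$ cannot occur at the same $\mu$) would lose constant factors the claim cannot afford, so one must track $\delta$ and $\eta$ simultaneously as functions of $\mu$.
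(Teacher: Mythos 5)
Your approach is sound and, in an interesting way, tighter than what the paper does. Both proofs begin from the same two matching equations and eliminate $\alpha'$ to obtain the closed form $\alpha\bigl(\rho^{n+1}-1\bigr)=\frac{R/2-1}{1-u^2R^2}$ with $\rho=(\lambda-\Delta)/(\lambda+\Delta)$, $u=(\lambda-\Delta)/4$, and $R$ the ratio from Claim~\ref{cl:approx} point~4; this matches the first line of the paper's~\eqref{eq:boundafinal}. After that the routes diverge. The paper rewrites the fraction as $\tfrac{1}{2}\cdot\bigl(u^2R-R^{-1}\bigr)^{-1}$ and then bounds the term $u^2R-R^{-1}$ directly. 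You instead set $\delta=R-1$, $\eta=1-\rho$ and expand, noting that the value is $-1-(\delta-\eta)$ up to second order and that the claim therefore hinges on a two-sided $\Theta(1/n)$ bound for $\delta-\eta$. Your central observation — that $\delta$ and $\eta$ must be tracked \emph{jointly} as functions of $\mu$, since inserting the separate worst-case bounds of Claim~\ref{cl:approx} (which cannot hold at the same $\gamma$) forfeits constant factors the claim cannot afford — is correct and, I believe, genuinely necessary for the lower bound of the statement: Claim~\ref{cl:approx} point~4 alone only gives $\delta\le 4/(n+1)$, which together with $\eta\ge 0$ yields $\delta-\eta\le 4/(n+1)$, too weak by a factor of two. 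Your coupled first-order computation $\delta-\eta\approx \tfrac{1}{n+1}-\tfrac{t}{2}$ with $t=\mu-1\in(0,\tfrac{1}{2(n+2)})$, giving $\delta-\eta\in\bigl[\tfrac{3}{4(n+1)},\tfrac{1}{n+1}\bigr]$ up to lower order, lands comfortably inside the $[\tfrac{1}{4(n+1)},\tfrac{2}{n+1}]$ window of the claim.

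Two remarks on comparing to the paper, both worth flagging. First, the paper's second line of~\eqref{eq:boundafinal} does not appear to equal its first line: multiplying $\tfrac{1}{2}(u^2R-R^{-1})^{-1}$ by $R/R$ gives $\tfrac{R/2}{u^2R^2-1}$, not $\tfrac{1-R/2}{u^2R^2-1}$; the numerators $R/2$ and $1-R/2$ agree only when $R=1$, and the discrepancy is $\Theta(\delta)=\Theta(1/n)$, which is the same order as the target half-window, so it is not a harmless rounding. Second, the paper's upper bound on its denominator uses the step $u^2R-R^{-1}\le u^2-R^{-1}$, which requires $R\le 1$, whereas here $R>1$. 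Your route, which keeps the exact formula and parametrizes by $\mu$ throughout, sidesteps both of these. The one weakness of your proposal is that it remains a plan: the phrase ``up to lower-order corrections'' does real work, because the corrections and the half-window are both $\Theta(1/n)$. You have roughly a factor $3$ of slack on the $\tfrac{1}{4(n+1)}$ side and a factor $2$ on the $\tfrac{2}{n+1}$ side, so the bookkeeping should close, but to count as a proof you need to replace the asymptotic expansions by the explicit two-sided bounds from Claim~\ref{cl:approx} points~1--3 (applied as functions of the single parameter $t=\mu-1$) and verify the resulting elementary inequalities.
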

We again defer the proof of Claim~\ref{cl:alphapositive} to Section~\ref{sec:aux}.

Our goal is now to bound all terms $\phi_i$ in terms of $\phi_{2n+2}$ as $\phi_{2n+2}$ is the value of the entry of $x$ corresponding to the root of $T_2$. We do this in two steps. First, we relate all $\phi_i$ to $\phi_{n+1}$ as both~\eqref{eq:T1sol} and~\eqref{eq:T2sol} gives a formula for $\phi_{n+1}$. Using the relationship between $\phi_{2n+2}$ and $\phi_{n+1}$ then indirectly relates all $\phi_i$ to $\phi_{2n+2}$.

For $i \leq n+1$, we relate $\phi_{2n+3-i}$ to $\phi_{n+1}$ directly from~\eqref{eq:T2sol}, giving
\begin{align*}
\frac{\phi_{2n+3-i}}{\phi_{n+1}} &= \left(\frac{\lambda-\Delta}{4} \right)^{i-n-2} \cdot \frac{\left(\frac{\lambda+\Delta}{\lambda-\Delta}\right)^{i}-1}{\left(\frac{\lambda+\Delta}{\lambda-\Delta}\right)^{n+2}-1}.
\end{align*}
From Claim~\ref{cl:approx}, this implies
\begin{align}
\left(\frac{\lambda-\Delta}{4} \right)^{i-n-2} \frac{i}{4(n+2)} \leq \frac{|\phi_{2n+3-i}|}{|\phi_{n+1}|} \leq \left(\frac{\lambda-\Delta}{4} \right)^{i-n-2} \frac{4i}{n+2}.\label{eq:sandwichT2}
\end{align}
To bound the terms $\phi_i$ with $i \leq n+1$, we get from~\eqref{eq:T1sol} that
\begin{align}
    \frac{\phi_{i}}{\phi_{n+1}} &= \left(\frac{\lambda + \Delta}{4} \right)^{i-n-1} \cdot \frac{\alpha \left( \left(\frac{\lambda -\Delta}{\lambda+\Delta}\right)^i -1\right)+1}{\alpha \left( \left(\frac{\lambda -\Delta}{\lambda+\Delta}\right)^{n+1} -1\right)+1}.\label{eq:tightneed}
\end{align}
Using Claim~\ref{cl:alphapositive}, we conclude
\begin{align}
    \frac{|\phi_i|}{|\phi_{n+1}|} &\leq \left(\frac{\lambda + \Delta}{4} \right)^{i-n-1} \cdot \frac{\left|\alpha \left( \left(\frac{\lambda -\Delta}{\lambda+\Delta}\right)^i -1\right) \right| + 1}{(4(n+1))^{-1}} \notag\\
    &\leq 4(n+1)\cdot  \left(\frac{\lambda + \Delta}{4} \right)^{i-n-1}  \cdot \left(\left|\alpha \left( \left(\frac{\lambda -\Delta}{\lambda+\Delta}\right)^{n+1} -1\right) \right| + 1\right) \notag\\
    &\leq 12(n+1) \cdot \left(\frac{\lambda + \Delta}{4} \right)^{i-n-1}.\label{eq:upperT1}
\end{align}
Note here that it was crucial that our bound in Claim~\ref{cl:alphapositive} was very tight since the denominator in~\eqref{eq:tightneed} is very close to zero.

From~\eqref{eq:sandwichT2} with $i=1$, we conclude
\begin{align}
|\phi_{n+1}| \leq |\phi_{2n+2}| 4(n+2) \left(\frac{\lambda-\Delta}{4}\right)^{n+1}.\label{eq:relateroot}
\end{align}
Inserting~\eqref{eq:relateroot} into~\eqref{eq:sandwichT2}, we now get for $i \leq n+2$ that:
\[
|\phi_{2n+3-i}| \leq 4(n+2) |\phi_{2n+2}| \left(\frac{\lambda-\Delta}{4} \right)^{i-1} \frac{4i}{n+2}= 16i |\phi_{2n+2}| \left(\frac{\lambda-\Delta}{4} \right)^{i-1}. 
\]
Similarly, inserting~\eqref{eq:relateroot} into~\eqref{eq:upperT1}, we get
\[
|\phi_i| \leq 4(n+2)|\phi_{2n+2}| 12(n+1) \left(\frac{\lambda+\Delta}{4}\right)^{i} \leq 48 (n+2)^2 |\phi_{2n+2}| \left(\frac{\lambda+\Delta}{4}\right)^{i}. 
\]
Using that there are $2^{i-1}$ nodes at level $i$ and $2n+3-i$, we finally get that
\begin{align*}
    \|x\|^2 &= \sum_{i=1}^{n+1} 2^{i-1} \left(\phi_i^2 + \phi_{2n+3-i}^2\right)\\
    &\leq \sum_{i=1}^{n+1} 2^{i-1} \left((16 i)^2 \left( \frac{\lambda-\Delta}{4}\right)^{2i-2} +48^2 (n+2)^4 \left(\frac{\lambda + \Delta}{4}\right)^{2i}\right)\phi_{2n+2}^2 \\
    &= O\left(n^4 \phi^2_{2n+2} \sum_{i=1}^{n+1} 2^{i} \left(\frac{\lambda+\Delta}{4} \right)^{2i} \right) \\
    &= O\left(n^4 \phi^2_{2n+2} \sum_{i=1}^{n+1} 2^{i} \left(\frac{\lambda^2 + \Delta^2 + 2\lambda \Delta}{16} \right)^{i} \right) \\
    &= O\left(n^4 \phi^2_{2n+2} \sum_{i=1}^{n+1} 2^{i} \left(\frac{2(8 + 2\sqrt{8} \gamma + \gamma^2) -8 + 2(\sqrt{8}+\gamma)4 \sqrt{\gamma}}{16} \right)^{i} \right) \\
    &= O\left(n^4 \phi^2_{2n+2} \sum_{i=1}^{n+1} 2^{i} \left(\frac{8 + 48 \sqrt{\gamma}}{16} \right)^{i} \right)\\
    &= O\left(n^4 \phi^2_{2n+2} \sum_{i=1}^{n+1} 2^{i} \left(\frac{1}{2} \left( 1 + 6 \sqrt{\gamma}\right)\right)^i\right)\\
    &= O\left(n^5 \phi^2_{2n+2} \left( 1 + 6 \sqrt{\gamma}\right)^n\right)\\
    &= O\left(n^5 \phi^2_{2n+2} e^{6 \sqrt{\gamma}n}\right).
\end{align*}
For $n+2 \leq \gamma^{-1/2}/16$, this is $O(n^5 \phi^2_{2n+2})$, implying $\phi_{2n+2}^2 = \Omega(n^{-5} \|x\|^2)$. Since $\phi_{2n+2}$ is the value $x_{\roottwo}$ of the root $\roottwo$ of $T_2$, this completes the proof.

\subsection{Auxiliary Results}
\label{sec:aux}
In this section, we prove the two auxiliary claims in Claim~\ref{cl:approx} and Claim~\ref{cl:alphapositive}. The first claim uses simple approximations of $(1+x)^a$ without any particularly novel ideas.
\begin{proof}[Proof of Claim~\ref{cl:approx}]
Using a Taylor series, we have for $0 \leq x < 1$ that
\begin{align*}
    1 + x &= \exp\left(\sum_{n=1}^\infty (-1)^{n+1} x^n/n \right)\\
    &\geq \exp(x - x^2/2) \\
    &\geq \exp(x/2).
\end{align*}
At the same time, it holds for all $x$ that $1+x \leq e^x$. Now observe that $\Delta = \sqrt{(\sqrt{8} + \gamma)^2-8} = \sqrt{ 2 \sqrt{8} \gamma + \gamma^2}$. Thus $\sqrt{\gamma} \leq \Delta \leq 4 \sqrt{\gamma}$ when $\gamma \leq 1$. This implies for $\gamma \leq 1/64$ that
\begin{align}
\frac{2 \Delta}{\lambda- \Delta} \leq \frac{8 \sqrt{\gamma}}{\sqrt{8} - 4\sqrt{\gamma}} \leq \frac{8 \sqrt{\gamma}}{\sqrt{8}-1/2} \leq 8 \sqrt{\gamma}.\label{eq:deltaoverlambda}
\end{align}
For integer $i \leq \gamma^{-1/2}/8$, we therefore have that $\left(\frac{\lambda+\Delta}{\lambda-\Delta}\right)^i = \left(1 + \frac{2\Delta}{\lambda-\Delta}\right)^i$ satisfies
\begin{align}
    1 + \frac{i \Delta}{\lambda-\Delta} \leq \exp\left( i \cdot \frac{\Delta}{\lambda-\Delta} \right) \leq \left(1 + \frac{2 \Delta}{\lambda - \Delta}\right)^i \leq \exp\left( i \cdot \frac{2 \Delta}{\lambda - \Delta}\right)
    \leq 1 + \frac{4 i \Delta}{\lambda-\Delta}. \label{eq:ratiopos}
\end{align}
For $0 \leq x < 1/2$, we also have $1-x \leq e^{-x}$ and $1-x = \exp(-\sum_{n=1}^\infty x^n/n) \geq \exp(-\sum_{n=1}^\infty x (1/2)^{n-1}) \geq \exp(-2x)$. Since $0 < 2\Delta/(\lambda + \Delta) \leq 2\Delta/(\lambda-\Delta) \leq 8 \sqrt{\gamma}$, we thus conclude for $i \leq \gamma^{-1/2}/16$ that $\left(\frac{\lambda-\Delta}{\lambda+\Delta}\right)^i = \left(1-\frac{2\Delta}{\lambda+\Delta}\right)^i$ satisfies
\begin{align*}
1 - \frac{4 i \Delta}{\lambda+\Delta} \leq \exp\left( -4 i \cdot \frac{\Delta}{\lambda+\Delta} \right) \leq \left(1 - \frac{2 \Delta}{\lambda + \Delta}\right)^i \leq \exp\left( - i \cdot \frac{2 \Delta}{\lambda + \Delta}\right)
    \leq 1 - \frac{i \Delta}{\lambda+\Delta}. 
\end{align*}
We finally bound the ratio
\[
\frac{1-\left(\frac{\lambda+\Delta}{\lambda-\Delta}\right)^{n+2}}{1-\left(\frac{\lambda+\Delta}{\lambda-\Delta}\right)^{n+1}} = \frac{1-\left(\frac{\lambda+\Delta}{\lambda-\Delta}\right)^{n+1} - \frac{2 \Delta}{\lambda-\Delta}\left(\frac{\lambda+\Delta}{\lambda-\Delta}\right)^{n+1}}{1-\left(\frac{\lambda+\Delta}{\lambda-\Delta}\right)^{n+1}}= 1+\frac{\frac{2 \Delta}{\lambda-\Delta}\left(\frac{\lambda+\Delta}{\lambda-\Delta}\right)^{n+1}}{\left(\frac{\lambda+\Delta}{\lambda-\Delta}\right)^{n+1}-1}.
\]
From~\eqref{eq:ratiopos} and $n+2 \leq \gamma^{-1/2}/16$, we have
\[
\frac{\frac{2 \Delta}{\lambda-\Delta}\left(\frac{\lambda+\Delta}{\lambda-\Delta}\right)^{n+1}}{\left(\frac{\lambda+\Delta}{\lambda-\Delta}\right)^{n+1}-1} \geq \frac{\frac{2 \Delta}{\lambda-\Delta}}{\left(\frac{\lambda+\Delta}{\lambda-\Delta}\right)^{n+1}-1}  \geq \frac{\frac{2 \Delta}{\lambda-\Delta}}{\frac{4(n+1)\Delta}{\lambda - \Delta}} = \frac{1}{2(n+1)}.
\]
From~\eqref{eq:ratiopos}, \eqref{eq:deltaoverlambda} and $n+2 \leq \gamma^{-1/2}/16$ we similarly get
\[
\frac{\frac{2 \Delta}{\lambda-\Delta}\left(\frac{\lambda+\Delta}{\lambda-\Delta}\right)^{n+1}}{\left(\frac{\lambda+\Delta}{\lambda-\Delta}\right)^{n+1}-1} \leq \frac{\frac{2 \Delta}{\lambda-\Delta}\left(1 + \frac{4(n+1)}{\lambda-\Delta}\right)}{\frac{(n+1)\Delta}{\lambda-\Delta}}  \leq \frac{2}{n+1} + \frac{8 \Delta}{\lambda-\Delta} \leq \frac{2}{n+1}+32 \sqrt{\gamma} \leq \frac{4}{n+1}.
\]
Thus
\[
1 + \frac{1}{2(n+1)} \leq \frac{1 - \left(\frac{\lambda+\Delta}{\lambda-\Delta} \right)^{n+2}}{1 - \left(\frac{\lambda+\Delta}{\lambda-\Delta} \right)^{n+1}} \leq 1 + \frac{4}{n+1}. 
\]
\end{proof}

We next proceed to give the proof of Claim~\ref{cl:alphapositive}. They key idea here is to consider the ratio $\phi_{n+1}/\phi_{n+2}$ using both~\eqref{eq:T1sol} and~\eqref{eq:T2sol}.
\begin{proof}[Proof of Claim~\ref{cl:alphapositive}]
Consider the ratio $\phi_{n+1}/\phi_{n+2}$. Using~\eqref{eq:T1sol}, this ratio equals
\begin{align*}
    \frac{\phi_{n+1}}{\phi_{n+2}} &= \frac{\left(\frac{\lambda-\Delta}{4}\right)^{n+1} \left(\alpha \left(1 - \left(\frac{\lambda+\Delta}{\lambda-\Delta}\right)^{n+1}\right) + \left(\frac{\lambda+\Delta}{\lambda-\Delta}\right)^{n+1}\right) }{\left(\frac{\lambda-\Delta}{4}\right)^{n+2} \left(\alpha \left(1 - \left(\frac{\lambda+\Delta}{\lambda-\Delta}\right)^{n+2}\right) + \left(\frac{\lambda+\Delta}{\lambda-\Delta}\right)^{n+2}\right)}.
    \end{align*}
This implies
\begin{align*}
    \frac{\phi_{n+1}}{\phi_{n+2}} \cdot \left(\frac{\lambda-\Delta}{4}\right) \left(\alpha \left(1 - \left(\frac{\lambda+\Delta}{\lambda-\Delta}\right)^{n+2}\right) + \left(\frac{\lambda+\Delta}{\lambda-\Delta}\right)^{n+2}\right) &=\left(\alpha \left(1 - \left(\frac{\lambda+\Delta}{\lambda-\Delta}\right)^{n+1}\right) + \left(\frac{\lambda+\Delta}{\lambda-\Delta}\right)^{n+1}\right).
\end{align*}
Using~\eqref{eq:T2sol}, we also have 
\[
\frac{\phi_{n+1}}{\phi_{n+2}} = \frac{\lambda-\Delta}{4} \cdot \frac{1-\left(\frac{\lambda+\Delta}{\lambda-\Delta}\right)^{n+2}}{1-\left(\frac{\lambda+\Delta}{\lambda-\Delta}\right)^{n+1}}
\]
Inserting this above and rearranging terms gives
\begin{align}
\alpha \left(1-\left(\frac{\lambda+\Delta}{\lambda-\Delta}\right)^{n+1}\right)\left(\left(\frac{\lambda-\Delta}{4}\right)^2 \left(\frac{1-\left(\frac{\lambda+\Delta}{\lambda-\Delta}\right)^{n+2}}{1 - \left(\frac{\lambda+\Delta}{\lambda-\Delta}\right)^{n+1}}\right)^2  - 1\right) &= \notag\\\left(\frac{\lambda+\Delta}{\lambda-\Delta}\right)^{n+1}\left(1-\left(\frac{\lambda+\Delta}{4}\right)\left(\frac{\lambda-\Delta}{4}\right)\frac{1-\left(\frac{\lambda+\Delta}{\lambda-\Delta}\right)^{n+2}}{1-\left(\frac{\lambda+\Delta}{\lambda-\Delta}\right)^{n+1}}\right). \label{eq:bounda}
\end{align}
Rearranging terms and using $\lambda^2-\Delta^2 = 8$ gives
\begin{align}
\alpha \left(\left(\frac{\lambda-\Delta}{\lambda+\Delta}\right)^{n+1}-1\right) &= \frac{1-\frac{1}{2} \cdot \frac{1-\left(\frac{\lambda+\Delta}{\lambda-\Delta}\right)^{n+2}}{1-\left(\frac{\lambda+\Delta}{\lambda-\Delta}\right)^{n+1}}}{\left(\frac{\lambda-\Delta}{4}\right)^2 \left(\frac{1-\left(\frac{\lambda+\Delta}{\lambda-\Delta}\right)^{n+2}}{1 - \left(\frac{\lambda+\Delta}{\lambda-\Delta}\right)^{n+1}}\right)^2  - 1}\notag\\
&=\frac{1}{2} \cdot \frac{1}{\left(\frac{\lambda-\Delta}{4}\right)^2 \left(\frac{1-\left(\frac{\lambda+\Delta}{\lambda-\Delta}\right)^{n+2}}{1 - \left(\frac{\lambda+\Delta}{\lambda-\Delta}\right)^{n+1}}\right)  - \left(\frac{1-\left(\frac{\lambda+\Delta}{\lambda-\Delta}\right)^{n+1}}{1 - \left(\frac{\lambda+\Delta}{\lambda-\Delta}\right)^{n+2}}\right)}
\label{eq:boundafinal}
\end{align}
Using Claim~\ref{cl:approx} point 4.\ to bound the ratio
\[
\frac{1-\left(\frac{\lambda+\Delta}{\lambda-\Delta}\right)^{n+2}}{1-\left(\frac{\lambda+\Delta}{\lambda-\Delta}\right)^{n+1}} \geq 1 + \frac{1}{2(n+1)},
\]
the denominator of~\eqref{eq:boundafinal} is at least
\begin{align}
    \left(\frac{\lambda-\Delta}{4}\right)^2 \left(\frac{1-\left(\frac{\lambda+\Delta}{\lambda-\Delta}\right)^{n+2}}{1 - \left(\frac{\lambda+\Delta}{\lambda-\Delta}\right)^{n+1}}\right)  - \left(\frac{1-\left(\frac{\lambda+\Delta}{\lambda-\Delta}\right)^{n+1}}{1 - \left(\frac{\lambda+\Delta}{\lambda-\Delta}\right)^{n+2}}\right) &\geq 
    \left(\frac{\lambda-\Delta}{4}\right)^2\left(1 + \frac{1}{2(n+1)}\right) -1.\label{eq:den}
\end{align}
We now observe that
\begin{align*}
    \left(\frac{\lambda-\Delta}{4}\right)^2 &= \frac{\lambda^2 + \Delta^2 - 2\lambda \Delta}{16} \\
    &= \frac{2(8 + 2 \sqrt{8} \gamma + \gamma^2) - 8 - 2 \lambda \Delta}{16} \\
    &\geq \frac{1}{2} + \frac{4 \sqrt{8} \gamma + 2\gamma^2 - 24 \sqrt{\gamma}}{16}\\
    &> \frac{1}{2} - \sqrt{\gamma} \\
    &\geq \frac{1}{2} - \frac{1}{16(n+2)}.
\end{align*}
Continuing from~\eqref{eq:den}, we have that the denominator of~\eqref{eq:boundafinal} is at least
\begin{align*}
\left(\frac{1}{2}-\frac{1}{16(n+2)}\right)\left(1 + \frac{1}{2(n+1)}\right)-1 \geq -\frac{1}{2} + \frac{1}{4(n+1)} - \frac{1}{16(n+1)}- \frac{1}{32(n+1)^2} \geq -\frac{1}{2}+\frac{1}{8(n+1)}.
\end{align*}
We may also upper bound the denominator as
\begin{align*}
    \left(\frac{\lambda-\Delta}{4}\right)^2 \left(\frac{1-\left(\frac{\lambda+\Delta}{\lambda-\Delta}\right)^{n+2}}{1 - \left(\frac{\lambda+\Delta}{\lambda-\Delta}\right)^{n+1}}\right)  - \left(\frac{1-\left(\frac{\lambda+\Delta}{\lambda-\Delta}\right)^{n+1}}{1 - \left(\frac{\lambda+\Delta}{\lambda-\Delta}\right)^{n+2}}\right) &\leq\\
    \left(\frac{\lambda-\Delta}{4}\right)^2 -\left(\frac{1-\left(\frac{\lambda+\Delta}{\lambda-\Delta}\right)^{n+1}}{1 - \left(\frac{\lambda+\Delta}{\lambda-\Delta}\right)^{n+2}}\right) &\leq \\
    \frac{1}{2} - \left(1 + \frac{1}{2(n+1)}\right)^{-1} &=\\
    \frac{1}{2} - \frac{2(n+1)}{2(n+1)+1} &=\\
    \frac{1}{2} - 1 + \frac{1}{2(n+1)+1} &\leq\\
    -\frac{1}{2}+\frac{1}{2(n+1)}.
\end{align*}
Using these bounds in~\eqref{eq:boundafinal} we conclude
\begin{align*}
    \alpha \left(\left(\frac{\lambda-\Delta}{\lambda+\Delta}\right)^{n+1}-1\right) &\leq \frac{1}{-1 + \frac{1}{4(n+1)}} \\
    &= - \left(\frac{1}{1-\frac{1}{4(n+1)}} \right) \\
    &= -\left(1 + \frac{\frac{1}{4(n+1)}}{1-\frac{1}{4(n+1)}}\right) \\
    &\leq -1-\frac{1}{4(n+1)}.
\end{align*}
and
\begin{align*}
    \alpha \left(\left(\frac{\lambda-\Delta}{\lambda+\Delta}\right)^{n+1}-1\right) &\geq \frac{1}{-1 + \frac{1}{n+1}} \\
    &= - \left(\frac{1}{1-\frac{1}{n+1}} \right) \\
    &= -\left(1 + \frac{\frac{1}{n+1}}{1-\frac{1}{n+1}}\right) \\
    &\geq -1-\frac{2}{n+1}.
\end{align*}
\end{proof}

\section*{Acknowledgment}
Kasper Green Larsen is co-funded by a DFF Sapere Aude Research Leader Grant No. 9064-00068B by the Independent Research Fund Denmark and co-funded by the European Union (ERC, TUCLA, 101125203). Views and opinions expressed are however those of the author(s) only and do not necessarily reflect those of the European Union or the European Research Council. Neither the European Union nor the granting authority can be held responsible for them. 

\bibliography{refs}
\bibliographystyle{abbrv}

\end{document}